\documentclass[a4paper,UKenglish]{article}

\usepackage{amsmath}
\usepackage{amsfonts}
\usepackage{graphicx,tikz}
\RequirePackage{fancyhdr}
\usepackage{xcolor}
\usepackage{boxedminipage}

\usepackage{todonotes}

\usepackage{vmargin}
\setmarginsrb{1.1in}{1.1in}{1.1in}{1.1in}{0mm}{0mm}{0mm}{7mm}
\usepackage{amssymb,amsmath,amsthm}

\usepackage[linesnumbered, boxed, algosection]{algorithm2e}
\newcommand{\inv}{^{-1}}

\usepackage{amstext}

\newcommand{\defparproblem}[4]{
  \vspace{1mm}
\noindent\fbox{
  \begin{minipage}{0.96\textwidth}
  \begin{tabular*}{\textwidth}{@{\extracolsep{\fill}}lr} #1  & {\bf{Parameter:}} #3 \\ \end{tabular*}
  {\bf{Input:}} #2  \\
  {\bf{Question:}} #4
  \end{minipage}
  }
  \vspace{1mm}
}

\usepackage{graphics}
 \newtheorem{definition}{Definition}
 \newtheorem{claim}{Claim}

\usepackage{pgf}
\usepackage{paralist} 
\usepackage{graphicx}
\usepackage{boxedminipage}
\usepackage{boxedminipage}
\usepackage{xcolor}
\usepackage[linesnumbered,boxed,algosection]{algorithm2e}
\usepackage{framed}

\newcommand{\NP}{\text{\normalfont  NP}}

\newcommand{\FPT}{\textsf{FPT}}

\usepackage{amsmath, amssymb, latexsym}
\usepackage{enumerate}

\usepackage{float}
\usepackage{xspace}

\newcommand{\labelcoverv}{{\sc Node Unique Label Cover}}

\newcommand{\labelcovere}{{\sc Edge Unique Label Cover}}

\newcommand{\No}{{\sc No}}
\newcommand{\Yes}{{\sc Yes}}
\newcommand{\Oh}{{\mathcal{O}}}

\theoremstyle{plain}
\newtheorem{thm}{\bf Theorem}[section]
\newtheorem{lem}{\bf Lemma}[section]

\newtheorem{obs}{\bf Observation}[section]

\usepackage[ pdftex, plainpages = false, pdfpagelabels, 
                 bookmarks=false,
                 bookmarksopen = true,
                 bookmarksnumbered = true,
                 breaklinks = true,
                 linktocpage,
                 pagebackref,
                 colorlinks = true,  
                 linkcolor = blue,
                 urlcolor  = blue,
                 citecolor = red,
                 anchorcolor = green,
                 hyperindex = true,
                 hyperfigures
                 ]{hyperref}

%


\pagenumbering{arabic}





\thispagestyle{empty}
\begin{document}

\title{A Linear Time Parameterized Algorithm for {\sc Node Unique Label Cover}}

 \author{
 Daniel Lokshtanov\thanks{University of Bergen, Bergen, Norway.\texttt{daniello@ii.uib.no}}
 \and M. S. Ramanujan\thanks{Algorithms and Complexity Group, TU Wien, Vienna, Austria     \texttt{ramanujan@ac.tuwien.ac.at}} 
 \and  Saket Saurabh\addtocounter{footnote}{-1}\footnotemark\thanks{The Institute of Mathematical Sciences, Chennai, India \texttt{saket@imsc.res.in}}
 }

\maketitle

\begin{abstract}
The optimization version of the {\sc Unique Label Cover} problem is at the heart of the Unique Games Conjecture which has played an important role in the proof of several tight inapproximability results. In recent years, this problem has been also studied extensively 
from the point of view of parameterized complexity. Cygan et al.~[FOCS 2012] proved that this problem is fixed-parameter tractable ({\FPT}) and Wahlstr\"{o}m [SODA 2014] gave an {\FPT} algorithm with an improved parameter dependence. Subsequently, Iwata, Wahlstr\"{o}m and Yoshida [2014] proved that the \emph{edge} version of {\sc Unique Label Cover} can be solved in \emph{linear} {\FPT}-time. That is, there is an {\FPT} algorithm whose dependence on the input-size is linear. However, such an algorithm for the \emph{node} version of the problem was left as an open problem. In this paper, we resolve this question by presenting the first linear-time {\FPT} algorithm for 
 {\labelcoverv}. 
\end{abstract}

\section{Introduction}


In the {\sc Unique Label Cover} problem we are given an undirected
graph $G$, where each edge $uv = e \in E(G)$ is associated with a permutation  $\phi_{e,u}$ of a constant size alphabet $\Sigma$. The goal is to construct a labeling  $\Psi:V(G) \to \Sigma$ maximizing the number of satisfied edge constraints, that is, edges for which $( \Psi(u),\Psi(v))\in \phi_{uv,u}$ holds.  For some $\epsilon >0$ and given  {\sc Unique Label Cover} instance $L$, 
{\sc Unique Label Cover}$(\epsilon)$ is the decision problem of distinguishing between the following two cases: (a) there is a labeling $\Psi$ under which at least $(1-\epsilon) |E(G)|$ edges are satisfied; and (b) for every labeling  $\Psi$ at most $\epsilon |E(G)|$ edges are satisfied.  This problem is at the heart of famous Unique Games Conjecture (UGC) of Khot~\cite{Khot02a}. Essentially, UGC says that for  any 
$\epsilon > 0$, there is a constant $M$ such that it is NP-hard to decide {\sc Unique Label Cover}$(\epsilon)$ on instances with label set of size $M$. The {\sc Unique Label Cover}($\epsilon$)  problem over the years has become  a canonical problem to obtain tight inapproximbaility results.  
We refer the reader to a survey of Khot~\cite{Khot10} for more detailed discussion on UGC.

In recent times {\sc Unique Label Cover} has also attracted a lot of attention in the realm of parameterized complexity. In particular two parameterizations, namely, {\labelcovere} and {\labelcoverv} have been extensively studied. These problems are, not only, interesting combinatorial problems on its own but they also generalize several well-studied problems in the realm of parameterized complexity.  The objective of this paper is to study the following problem.  

\defparproblem{{\labelcoverv}}
{A simple graph $G$, finite alphabet $\Sigma$, integer $k$ and for every edge $e$ and each of its endpoints $u$, a permutation $\phi_{e,u}$ of $\Sigma$ such that if $e=(u,v)$ then $\phi_{e,u}=\phi_{e,v}\inv$ and a function $\tau:V\to 2^\Sigma$.}
{$\vert \Sigma\vert+k$}{Is there a set $X\subseteq V(G)$ and a function $\Psi:V(G)\setminus X \to \Sigma$ such that for any $v\in V(G)\setminus X$, $\Psi(v)\in \tau(v)$ and for any $(u,v)\in E(G-X)$, we have $( \Psi(u),\Psi(v))\in \phi_{uv,u}$?
}

\medskip
We remark that the standard formulation of this problem excludes the function $\tau$. However, this formulation is a clear generalization of the standard formulation (simply set $\tau(v)=\Sigma$ for every vertex $v$) and the way we describe our algorithm makes it notationally convenient to deal with this statement. To make the presentation simpler, we assume that $\Sigma = [|\Sigma|]=\{1,\dots, |\Sigma|\}$. The {\labelcoverv} generalizes a well-studied graph problem, namely  the 
 {\sc Group Feedback Vertex Set} problem ~\cite{Guillemot11a} and thus 
{\sc Odd Cycle Transversal}, {\sc Feedback Vertex Set}.  It also encompasses the 
 {\sc Multiway Cut} problem.

%


The parameterized complexity of the {\labelcoverv} problem was first studied by Chitnis et al.~\cite{ChitnisCHPP12} who proved it is {\FPT} by giving an algorithm running in time $2^{\Oh(k^2\cdot \log |\Sigma|)}n^4 \log n$. Subsequently, Wahlstr\"{o}m \cite{Wahlstrom14} improved the parameter dependence by giving an algorithm running in time $\Oh(\vert \Sigma \vert ^{2k} n^{\Oh(1)})$. The \emph{edge} version of this problem was proved to be solvable in {\FPT}-\emph{linear} time by Iwata et al.~\cite{IwataWY15} who gave an algorithm running in time $\Oh(\vert \Sigma \vert ^{2k}(m+ n))$.  However, their approach does not apply to the much more general \emph{node} version of the problem and the existence of an {\FPT} algorithm with a linear time dependence on the input size has remained an open question. In this paper, we answer this question in the affirmative by giving a linear time {\FPT} algorithm for this problem. 
Note that we have stated the problem in a slightly more general form than is usually seen in literature. However, this modification does not affect the solvability of the problem in linear {\FPT} time.
We now state our theorem formally.

\begin{thm}\label{thm:main_theorem}
	There is a $2^{\Oh(k\cdot |\Sigma| \log |\Sigma|)}(m+n)$
	 algorithm solving {\labelcoverv}, where $m$ and $n$ are the number of edges and vertices respectively in the input graph.
\end{thm}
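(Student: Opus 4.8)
\medskip
\noindent\textbf{Proof plan.} Since every constraint involves only the two endpoints of an edge, a set $X$ is a solution if and only if $X$ restricted to each connected component of $G$ is a solution for that component; as the budgets used in the components sum to at most $k$ and the sizes sum to $m+n$, it suffices to obtain the claimed bound for connected $G$. For connected $G$ I would pass to the group-labelled picture, regarding each $\phi_{e,u}$ as an element of the symmetric group $\mathrm{Sym}(\Sigma)$, so that consistency of $(G,\Psi)$ along an edge means precisely that $\Psi$ is parallel-transported across it. Then $(X,\Psi)$ exists with $\Psi(v)\in\tau(v)$ iff (i) $X$ hits every closed walk of $G$ whose composed permutation is not the identity, and (ii) inside each component of $G-X$ the propagated feasible sets have a common value. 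Condition (i) is exactly a {\sc Group Feedback Vertex Set} instance over $\mathrm{Sym}(\Sigma)$; I would fold (ii) in by attaching at each vertex $v$ a small gadget---a bundle of parallel paths to one fixed ``frame'' vertex whose permutations forbid exactly the labels outside $\tau(v)$---which multiplies $m+n$ by $\Oh(|\Sigma|)$ and leaves $k$ unchanged. It is also convenient to lift to the configuration graph ($|\Sigma|$ copies of each vertex, matchings induced by the $\phi_{e,u}$), which turns the problem into a clean node-separation question at the cost of another $\Oh(|\Sigma|)$ size factor.

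For the engine I would use flow-guided branching, in the spirit of the linear-time FPT framework of Iwata--Oka--Yoshida and the edge-version algorithm of Iwata--Wahlstr\"{o}m--Yoshida, adapted to node deletions. Splitting each vertex into in- and out-copies turns node deletion into arc deletion, after which one sets up an LP relaxation of the transversal/cut problem; the goals are (a) that this relaxation has a half-integral optimum and (b) that this optimum, together with the variables forced to $0$ or $1$ in some integral optimum (``persistence''), is computable by a \emph{single} max-flow run in $\Oh(k(m+n))$ time, so that branching on any surviving half-integral variable strictly decreases the measure $2k-\mathrm{OPT}_{\mathrm{LP}}$. To anchor the reference frame---the role ordinarily played by iterative compression, which we must avoid since it costs a factor of $n$---I would, at the top of each recursion, guess how a solution meets a bounded-size ``core'' and, along each of the $\Oh(k)$ pieces into which the core separates the rest of the instance, guess the parallel-transport permutation; with $|\Sigma|!$ choices per piece this is $(|\Sigma|!)^{\Oh(k)}=2^{\Oh(k|\Sigma|\log|\Sigma|)}$ guesses, the dominant term. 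Each node of the resulting search tree performs one linear-time flow computation plus a linear-time propagation/consistency check, which yields the stated running time.

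I expect the crux of the argument to be (a)--(b): node-deletion {\sc ULC} does not inherit a half-integral, persistent LP from the edge version in any obvious way, so one must pick the right relaxation---plausibly through a gammoid / transversal-matroid model of the split instance, or a suitably skewed cut LP---and prove half-integrality and persistence, and, critically for the running time, extract both the optimum and its persistent part from one flow computation rather than from iterative compression or repeated reoptimization. A second difficulty is carrying out the ``shadow removal'' bookkeeping that guarantees termination after $\Oh(k)$ branching rounds in linear time; here I would sidestep the usual randomized sampling and work directly from the support of the half-integral solution, which already restricts attention to the part of the instance reached by $\Oh(k)$ flow paths.
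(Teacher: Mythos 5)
Your high-level setup matches the paper in two places: reducing to connected $G$, and lifting to the ``configuration graph'' ($|\Sigma|$ copies of each vertex joined by matchings induced by the $\phi_{e,u}$), which is exactly the paper's auxiliary graph $H_I$. You also correctly sense that the right measure should be of the form (a $k$-scaled constant) minus (a cut quantity) and that each branching step should be a single linear-time cut/flow computation. Beyond that, however, the engine you propose is not the paper's, and the part you flag as ``the crux'' is precisely where your plan has no substance.

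Concretely, you propose flow-guided LP branching in the Iwata--Oka--Yoshida / Iwata--Wahlstr\"om--Yoshida style: find a half-integral, persistent LP relaxation, solve it with one max-flow, branch on a surviving half-integral variable, and drive down $2k-\mathrm{OPT}_{\mathrm{LP}}$. You yourself acknowledge that the node version ``does not inherit a half-integral, persistent LP from the edge version in any obvious way'' and offer only candidate models (gammoid/transversal matroid, skewed cut LP) without arguing for any of them. This is a genuine gap, and a load-bearing one: the paper explicitly says that the Iwata et al.\ LP approach ``fails when it comes to the node version'' and is the open problem being resolved here. The paper therefore uses no LP at all. Instead it works purely combinatorially inside $H_I$: it defines ``regular'' vertex sets (sets containing at most one copy of each $G$-vertex), distinguishes ``good'' vs.\ ``bad'' minimum $v_\alpha$--$T$ separators according to whether $R[v_\alpha,S]$ is regular, proves a combinatorial Persistence Lemma (Lemma~\ref{lem:reduction}) by an explicit exchange/rerouting argument on paths in $H_I$ (not by LP persistency), and computes the relevant separators via the layered separator decomposition of Marx--O'Sullivan--Razgon (Lemma~\ref{lem:separator layer}), which runs in time $\Oh(\ell'(m+n))$. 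The measure is $(|\Sigma|+1)k-\lambda(I)$ with $\lambda(I)$ a minimum vertex-separator size (not an LP value), and the branching rules $\mathbf{B_0},\dots,\mathbf{B_3}$ increase $\lambda$ or decrease $k$ in every branch, with branching factor at most $\max\{|\Sigma|+1,5\}$, giving $2^{\Oh(k|\Sigma|\log|\Sigma|)}(m+n)$.

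A second gap is your top-level ``guess a bounded-size core and a parallel-transport permutation on each of $\Oh(k)$ pieces'' step. This is vague (what core? why bounded size? why $\Oh(k)$ pieces?) and isn't how the reference frame is anchored in the paper: the paper simply branches on the label of a single deletable vertex $v$ (rule $\mathbf{B_0}$, with branching factor $|\tau(v)|+1\le|\Sigma|+1$), sets $w^*=v$, and then the separator machinery does all further propagation. You should also be wary of the ``fold $\tau$ into Group FVS via gadgets'' step: it is not clearly compatible with keeping everything linear and it is not needed once you have $H_I$, because in $H_I$ the constraint set $\tau$ is encoded simply by putting $u_\gamma$ into the target set $T$ whenever $\gamma\notin\tau(u)$. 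In short: right graph lift, right shape of measure, but the LP/flow-persistence engine you rely on is exactly the tool the paper had to abandon, and you leave its replacement unconstructed.
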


This answers an open question of Iwata et al.~\cite{IwataWY15}.
Furthermore, when the label set $\Sigma$ is of constant-size for some fixed constant, our algorithm achieves optimal asymptotic dependence on the budget $k$ under the Exponential Time Hypothesis \cite{ImpagliazzoPaturiZane01}.

%

By its very nature, the {\labelcoverv} problem is a problem about breaking various types of dependencies between vertices. Since these dependencies are propagated along edges, it is reasonable to view the problem as breaking these dependencies by hitting appropriate sets of paths in the graph. Chitnis et al.~\cite{ChitnisCHPP12} used this idea to argue that highly connected pairs of vertices will always remain dependent on each other and hence one can recursively solve the problem by first designing an algorithm for graphs that are `nearly' highly connected and then use this algorithm as a base case in a divide and conquer type approach. However, the polynomial dependence of their algorithm is $\Oh(n^4 \log n)$ where $n$ is the number of vertices in the input. Subsequently, 
Wahlstrom \cite{Wahlstrom14} improved the parameter dependence by using a branching algorithm based on the solution to a specific linear program. However, since this algorithm requires solving linear programs, the dependence on the input is far from linear. Iwata et al.~\cite{IwataWY15} showed that for several special kinds of LP-relaxations, including those involved in the solution of the \emph{edge} version of {\sc Unique Label Cover}, the corresponding linear program can be solved in linear-time using flow-based techniques and hence they were able to obtain the first linear-time {\FPT} algorithm for the edge version of {\sc Unique Label Cover}. However, their approach fails when it comes to the \emph{node} version of this problem.

\medskip
\noindent
\textbf{Our Techniques.}
In this paper, we view the {\labelcoverv} problem as a problem of hitting paths between certain pairs of vertices in an appropriately designed \emph{auxiliary} graph $H$ whose size  is greater than that of the input graph $G$ by a factor depending only on the parameter. 
We then show that for any prescribed labelling on the vertices of $G$, 
 it is possible to  select (in linear time)  a constant-size set of vertices of $G$ such that after guessing the intersection of this set with a hypothetical solution, if we augment the labelling by branching over all permitted labellings of the remaining vertices in this set then we reduce a pre-determined measure of the input which depends only on the parameter. By repeatedly doing this, we obtain a branching algorithm for this problem where each step requires linear time.
The main technical content of the paper is in proving that :
   
  \begin{itemize} 
  \item there exists a constant-size vertex set and an appropriate measure for the instance such that the measure `improves' in each step of the branching and
   \item such a vertex set can be computed in linear time.
  
  \end{itemize}

\smallskip
\noindent 
{\bf Related work on improving dependence on input size in {\FPT} algorithms.}  Our algorithm for {\labelcoverv}  belongs to a large body of work where the main goal is to design linear time algorithms  for \NP-hard problems for a fixed value of $k$. That is, to design  an algorithm with running time $f(k)\cdot \Oh(|I|)$, where $|I|$ denotes the size of the input instance.  This area of research predates even parameterized complexity. The genesis of parameterized complexity is in the theory of graph minors, developed by Robertson and Seymour~\cite{RobertsonS95b,RobertsonS03b,RobertsonS04}. Some of the important algorithmic consequences of this theory include $\Oh(n^3)$ algorithms for {\sc Disjoint Paths} and {\sc $\cal F$-Deletion} for every fixed values of $k$. 
These results led to a whole new area of designing algorithms for \NP-hard problems with as small dependence on the input size as possible; resulting in 
algorithms with improved dependence on the input size for {\sc Treewidth}~\cite{stocBodlaender93,Bodlaender96}, \FPT{} approximation for 
{\sc Treewidth}~\cite{BodlaenderDDFLP13,Reed92}
{\sc Planar $\cal F$-Deletion}~\cite{stocBodlaender93,Bodlaender96,FellowsL88,FominLMS12,FominLMRS15}, and {\sc Crossing Number}~\cite{stocGrohe01,Grohe04,KawarabayashiR07}, to name a few.

The advent of parameterized complexity started to shift the focus away from the running time dependence on input size to the dependence on the parameter. That is, the goal became designing parameterized algorithms with running time upper bounded by $f(k)n^{\Oh(1)}$, where the function $f$ grows as slowly as possible.
Over the last two decades researchers have tried to optimize one of these objectives, but rarely both at the same time. More recently, efforts have been made towards obtaining linear (or polynomial) time parameterized algorithms that compromise as little as possible on the dependence of the running time on the parameter $k$. The gold standard for these results are algorithms with linear dependence on input size as well as provably optimal (under ETH) dependence on the parameter. New results in this direction include parameterized algorithms for problems such as {\sc Odd Cycle Transversal}~\cite{IwataOY14,RamanujanS14}, {\sc Subgraph Isomorphism}~\cite{Dorn10}, {\sc Planarization}~\cite{JansenLS14,Kawarabayashi09}, {\sc Subset Feedback Vertex Set}~\cite{LokshtanovRS15}
as well as a single-exponential and linear time parameterized constant factor approximation algorithm for {\sc Treewidth}~\cite{BodlaenderDDFLP13}. Other recent results include parameterized algorithms with improved dependence on input size for a host of problems~\cite{GroheKR13,KawarabayashiKR12,KawarabayashiM08,KawarabayashiMR08,KawarabayashiR09,KawarabayashiR10}. 
\medskip

\smallskip
\noindent 
{\bf Related work on graph separation in {\FPT} algorithms.}  Marx~\cite{Marx06} was the first to consider cut problems 
in the context of parameterized complexity. He observed that the {\sc Multiway Cut} problem can be 
shown to be {\FPT} by a simple application of graph minors, (see~\cite[Section $3$]{Marx06}) and then went on to give an algorithm for the same problem 
with a running time of $\Oh(4^{k^3}n^{\Oh(1)})$. 
The notion of \emph{important separators} which was introduced in this paper has been instrumental in settling the parameterized complexity of numerous graph-separation problems including
{\sc Directed Feedback vertex Set}~\cite{ChenLLOR08}, {\sc Almost $2$ SAT}~\cite{RazgonO09}, {\sc Multicut} \cite{MarxR14}, the directed versions of {\sc Multiway Cut} \cite{ChitnisHM13}, {\sc Subset Feedback Vertex Set} \cite{ChitnisCHM15}, {\sc Multicut} restricted to acyclic digraphs \cite{KratschPPW15} as well as parity based generalizations of {\sc Multiway Cut} \cite{LokshtanovR12}. 

%

%
%

\section{Preliminaries}

We fix a label set $\Sigma$ and assume that all instances of {\labelcoverv} we deal with are over this label set.
When we refer to a set $X$ being a \emph{solution} for a given instance of {\labelcoverv}, we implicitly assume that $X$ is a set of \emph{minimum} size. We denote the set of functions $\{\phi_{e,u}\}_{e\in E(G),u\in e}$ simply as $\phi$ (without any subscript).

Before we proceed to describe our algorithm for {\labelcoverv}, we make a few remarks regarding the 
representation of the input.
 We assume that the input graph is given in the form of an adjacency list and for every edge $e=(u,v)$ the permutations $\phi_{e,v}$ and $\phi_{e,u}$ are included in the two nodes of the adjacency list corresponding to the edge $e$. This is achieved by representing the permutations as  $\vert \Sigma\vert$-length arrays over the elements in $[|\Sigma|]$. 
 It is straightforward to check that given the input to {\sc Label Cover} in this form, the decision version of the problem can be solved in time $\Oh(|\Sigma|^{\Oh(1)}(m+n))$. We assume that the input to {\labelcoverv} is also given in the same manner. 

\section{Setting up the tools}



\subsection{Defining the auxiliary graph}

\begin{definition}
	
	Let $(G,k,\phi,\tau)$ be an instance of {\labelcoverv} and let $\Psi:V\to \Sigma$. We say that $\Psi$ is a \textbf{feasible labeling} for this instance if for all $(u,v)\in E(G)$, $(\Psi(u),\Psi(v))\in \phi_{uv,u}$. For $\tau:V\to 2^\Sigma$, we say that $\Psi$ is \textbf{consistent with} $\tau$ if for every $v\in V(G)$, $\Psi(v)\in \tau(v)$.
\end{definition}

For an instance $I=(G,k,\phi,\tau)$ of {\labelcoverv}, we define an associated auxiliary graph $H_I$ as follows. The vertex set of $H_I$ is $V(G) \times \Sigma$. For notational convenience, we denote the vertex $(v,i)$ by $v_i$. The vertex $v_i$ is meant to represent the (eventual) labeling of $v$ by the label $i$. The edge set of $H_I$ is defined as follows. For every edge $e=(u,v)$ and for every $i\in \Sigma$, we have an edge $(u_i,v_{\phi_{e,u}(i)})$. That is, we add an edge between $u_i$ and $u_j$ where $j$ is the image of $i$ under the permutation $\phi_{e,u}$. 

We now prove certain structural lemmas regarding this auxiliary graph which will be used in the design as well as analysis of our algorithm. For ease of description, we will treat instances of {\sc Label Cover} as instances of {\labelcoverv}. To be precise, we represent an instance $(G,\phi)$ of {\sc Label Cover} as the trivially equivalent instance $(G,0,\phi,\tau^0)$ of {\labelcoverv} where, $\tau^0(v)=\Sigma$ for every $v\in V(G)$. The first observation follows from the definition of $H_I$ and the fact that since $G$ is a simple graph, for every edge $e\in E(G)$, the set of edges in $H_I$ that correspond to this edge form a matching.

  \begin{obs} \label{obs:nocommonneighbors}
	   Let $I=(G,0,\phi,\tau)$ be an instance of {\labelcoverv}. Then, for every $v\in V(G)$, for every distinct $i,j\in \Sigma$, $v_i$ and $v_j$ have no common neighbors in $H_I$. 
	   \end{obs}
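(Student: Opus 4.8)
The plan is to unwind the definition of $H_I$ directly. Fix $v \in V(G)$ and two distinct labels $i, j \in \Sigma$. I want to show that no vertex $w_\ell \in V(H_I)$ is adjacent to both $v_i$ and $v_j$. Every edge of $H_I$ arises from some edge $e = (u,v) \in E(G)$: for such $e$ and each $a \in \Sigma$ we get the edge $(u_a, v_{\phi_{e,u}(a)})$ of $H_I$. So a neighbour of $v_i$ in $H_I$ must be of the form $u_a$ where $u$ is a $G$-neighbour of $v$ along some edge $e$, and $a = \phi_{e,v}(i)$ (using $\phi_{e,u} = \phi_{e,v}^{-1}$, so the label on the $u$-side is the $\phi_{e,v}$-image of $i$).

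**The key step.** Suppose $w_\ell$ is a common neighbour of $v_i$ and $v_j$. Being a neighbour of $v_i$, there is an edge $e$ of $G$ incident to $v$ with other endpoint $w$, and $\ell = \phi_{e,v}(i)$. Being a neighbour of $v_j$, there is an edge $e'$ of $G$ incident to $v$ with other endpoint $w$ (the first coordinate must match), and $\ell = \phi_{e',v}(j)$. Here is where simplicity of $G$ is used: since $G$ is simple, there is at most one edge between $v$ and $w$, so $e = e'$, hence $\phi_{e,v} = \phi_{e',v}$. Then $\phi_{e,v}(i) = \ell = \phi_{e,v}(j)$, and since $\phi_{e,v}$ is a permutation (hence injective), $i = j$, contradicting $i \neq j$. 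Therefore no common neighbour exists.

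**Remarks on obstacles.** There is essentially no technical obstacle here; the only point requiring care is the bookkeeping with the permutations: one must be consistent about whether a neighbour of $v_i$ sits on the $u$-side with label $\phi_{e,v}(i)$ or $\phi_{e,u}(i)$, but the relation $\phi_{e,u} = \phi_{e,u}^{-1}$... rather $\phi_{e,u}=\phi_{e,v}^{-1}$ makes these compatible and the two endpoints symmetric. Equivalently, one can phrase the whole argument as the observation already flagged in the excerpt: for a fixed $e = (u,v)$, the $|\Sigma|$ edges of $H_I$ coming from $e$ form a perfect matching between $\{u_a : a \in \Sigma\}$ and $\{v_b : b \in \Sigma\}$, so each such edge contributes at most one neighbour to each $v_b$; and distinct edges of $G$ at $v$ go to distinct first-coordinates, so a common neighbour would have to come from a single edge $e$, which by the matching property cannot touch both $v_i$ and $v_j$. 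I would present the matching formulation since it is the cleanest and matches the hint given just before the statement.
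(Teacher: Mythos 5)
Your proof is correct and uses exactly the argument the paper sketches just before the observation: simplicity of $G$ forces a would-be common neighbour to come from a single $G$-edge $e$, and the edges of $H_I$ arising from $e$ form a perfect matching (since $\phi_{e,v}$ is a bijection), so they cannot hit both $v_i$ and $v_j$. You have simply written out in full the one-line justification the paper gives inline; the approach is the same.
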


\begin{obs}\label{obs:color_propagation}
Let $I=(G,0,\phi,\tau)$ be a {\Yes} instance of {\labelcoverv} and let $\Psi$ be a feasible labeling for this instance. Let $v\in V(G)$ and $i=\Psi(v)$. Then, for any vertex $u\in V(G)$ and $j\in \Sigma$, if $u_j$ is in the same connected component as $v_i$ in $H_I$ then $\Psi(u)=j$. 
\end{obs}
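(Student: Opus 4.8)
The plan is to argue by induction on the length of a walk from $v_i$ to $u_j$ in the auxiliary graph $H_I$. Since $u_j$ lies in the same connected component as $v_i$, there is a walk $v_i = w^0_{\ell_0}, w^1_{\ell_1}, \dots, w^t_{\ell_t} = u_j$ in $H_I$, where each $w^s \in V(G)$ and each $\ell_s \in \Sigma$, and the induction is on $t$. The statement to be established for the walk of length $s$ is simply $\Psi(w^s) = \ell_s$.

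For the base case $t = 0$ we have $u_j = v_i$, hence $u = v$ and $j = i$, so $\Psi(u) = \Psi(v) = i = j$, as required. For the inductive step, assume $t \ge 1$ and, by the induction hypothesis applied to the prefix ending at $w^{t-1}_{\ell_{t-1}}$, that $\Psi(w^{t-1}) = \ell_{t-1}$. By the construction of the edge set of $H_I$, the edge $\{w^{t-1}_{\ell_{t-1}}, w^t_{\ell_t}\}$ is generated by some edge $e = (w^{t-1}, w^t) \in E(G)$, and the definition of $H_I$ gives $\ell_t = \phi_{e, w^{t-1}}(\ell_{t-1})$. Since $\Psi$ is a feasible labeling for $I$, we have $(\Psi(w^{t-1}), \Psi(w^t)) \in \phi_{e, w^{t-1}}$, i.e.\ $\Psi(w^t) = \phi_{e, w^{t-1}}(\Psi(w^{t-1})) = \phi_{e, w^{t-1}}(\ell_{t-1}) = \ell_t$. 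As $w^t = u$ and $\ell_t = j$, this yields $\Psi(u) = j$, completing the induction.

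The only point that requires a little care is the bookkeeping between an edge of $H_I$ and the edge of $G$ that generates it: if the generating edge of $G$ is recorded with the opposite orientation, say $e = (w^t, w^{t-1})$, then the relation $\phi_{e, w^{t-1}} = \phi_{e, w^t}\inv$ from the problem definition shows that $\ell_t = \phi_{e,w^{t-1}}(\ell_{t-1})$ and $\ell_{t-1} = \phi_{e,w^t}(\ell_t)$ express the same constraint, and feasibility of $\Psi$ can be invoked in either direction. Apart from this symmetry observation the argument is entirely routine, and I do not expect any genuine obstacle.
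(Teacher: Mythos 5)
Your proof is correct and takes essentially the same approach as the paper: induction on the length of a walk (the paper uses a shortest path, with base case a single edge rather than length zero) and propagation of the label along each edge using the feasibility of $\Psi$. The remark about edge orientation and $\phi_{e,u} = \phi_{e,v}^{-1}$ is a sound bookkeeping point that the paper glosses over.
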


\begin{proof}
	The proof is by induction on the length of a shortest path in $H_I$ between $v_i$ and $u_j$. In the base case, suppose that $v_i$ and $u_j$ are adjacent. Then, by the definition of $H_I$, it must be the case that $(u,v)$ is an edge in $G$ and furthermore, $j=\phi_{uv,u}(i)$. Since $\Psi$, is feasible, it 
	   follows that $\Psi(u)=j$. We now move to the induction step and suppose that $P$ is a shortest path in $H_I$ from $v_i$ to $u_j$, where the length of $P$ is at least 2. Let $w\in V(G)$ and $r\in \Sigma$ such that $(w_r,u_j)$ is the last edge encountered when traversing $P$ from $v_i$ to $u_j$. Then, by the induction hypothesis, we can conclude that $\Psi(w)=r$. Furthermore, by the definition of $H_I$, it must be the case that $(w,u)$ is an edge in $G$ and $j=\phi_{wu,w}(r)$. Therefore, the feasibility of $\Psi$ implies that $\Psi(u)=j$. This completes the proof of the observation.
	   \end{proof}

	   The above observation describes the `dependency' between pairs of vertices which are in the same connected component of $G$. Moving forward, we will characterize the dependencies between vertices when subjected to additional constraints. Before we do so, we need the following definitions.

	   \begin{definition}
	   Let $I=(G,k,\phi,\tau)$ be an instance of {\labelcoverv}.
	For $v\in V(G)$, we use $[v]$ to denote the set $\{v_1,\dots, v_{|\Sigma|}\}$. For a subset $S\subseteq V(G)$, we use $[S]$ to denote the set $\bigcup_{v\in S}[v]$.  Similarly, for $e=(u,v)\in E(G)$, we use $[e]$ to denote the set $\mathlarger{\mathlarger{\{}}(u_i,v_j)\mathlarger{\mathlarger{\}}}_{i\in \Sigma, j=\phi_{e,u}(i)}$ of edges and for a subset $X\subseteq E(G)$, we use $[X]$ to denote the set $\bigcup_{e\in X}[e]$.  For the sake of convenience, we also reuse the same notation in the following way. For $v\in V(G)$ and $\alpha\in \Sigma$, we also use $[v_\alpha]$ to denote the set $\{v_1,\dots, v_{|\Sigma|}\}$. This definition  extends in a natural way to sets of vertices and edges of the auxiliary graph $H_I$. Finally, for a set $S\subseteq V(H_I)\cup E(H_I)$, we denote by $S\inv$ the set $\{s|s\in V(G)\cup E(G): [s]\cap S\neq \emptyset\}$.
	\end{definition}

	   \begin{definition}
	      Let $I=(G,k,\phi,\tau)$ be an instance of {\labelcoverv}.
	We say that a set $Z\subseteq V(H_I)\cup E(H_I)$ is \textbf{regular} if $|Z\cap [v]|\leq 1$ for any $v\in V(G)$ and $|Z\cap [e]|\leq 1$ for any $e\in V(G)$ and \textbf{irregular} otherwise. That is, regular sets contain at most 1 copy of any vertex and edge of $G$. 
\end{definition}

Now that we have defined the notion of regularity of sets, we prove the following lemma which shows that the auxiliary graph displays a certain symmetry with respect to regular paths. This will allow us to transfer arguments which involve a regular path between vertices $v_i$ and $u_j$ to one between vertices $v_{i_1}$ and $u_{j_1}$ where $i\neq i_1$ and $j\neq j_1$.

\begin{lem}\label{lem:path_copies}
	Let $I=(G,k,\phi,\tau)$ be an instance of {\labelcoverv}. Let $P$ be a regular path in $H_I$ from $v_i$ to $u_j$. Let $V(P)$ denote the set of vertices of $G$ in $P$ and let $U$ denote the set $[V(P)]$. Then, there are vertex disjoint paths $P_1,\dots, P_{|\Sigma|}$ in $H_I$ and a partition of $U$ into sets $U_1,\dots, U_{|\Sigma|}$ such that for each $r\in [|\Sigma|]$, $V(P_r)=U_r$ and $P_r$ is a path from $v_{i_1}$ to $u_{i_2}$ for some $i_1,i_2\in \Sigma$.
\end{lem}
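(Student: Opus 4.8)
The plan is to show that a regular path in $H_I$ is nothing more than one of $|\Sigma|$ ``parallel copies'' of a single simple path in $G$, the copies being obtained by propagating each possible starting label along that path.

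First I would fix notation, writing $P = x_0 x_1 \cdots x_\ell$ with $x_0 = v_i$, $x_\ell = u_j$, and $x_t = (w_t,c_t)$ for $w_t\in V(G)$, $c_t\in\Sigma$. The two things to extract from the hypotheses are: (i) since $P$ is regular, the $G$-vertices $w_0,\dots,w_\ell$ are pairwise distinct, so $Q := w_0 w_1\cdots w_\ell$ is a \emph{simple} path in $G$ (consecutive $w_t,w_{t+1}$ are adjacent because $x_tx_{t+1}\in E(H_I)$, and $G$ being simple makes the underlying $G$-edge unique); and (ii) adjacency in $H_I$ forces $c_{t+1}=\phi_{w_tw_{t+1},w_t}(c_t)$, so $c_t=\rho_t(c_0)$ where $\rho_t$ is the composition $\phi_{w_{t-1}w_t,w_{t-1}}\circ\cdots\circ\phi_{w_0w_1,w_0}$ (with $\rho_0=\mathrm{id}$), and each $\rho_t$ is a permutation of $\Sigma$ as a composition of permutations.

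Next, for each $r\in\Sigma$ I would define $P_r$ to be the sequence $(w_0,\rho_0(r)),(w_1,\rho_1(r)),\dots,(w_\ell,\rho_\ell(r))$. The identity $\rho_{t+1}=\phi_{w_tw_{t+1},w_t}\circ\rho_t$ shows consecutive vertices of $P_r$ are adjacent in $H_I$, and the $w_t$ being distinct shows its vertices are distinct, so $P_r$ is a path in $H_I$ from $v_r$ to $u_{\rho_\ell(r)}$ — of the required form, with $P = P_i$ in particular. Put $U_r := V(P_r)$. For each fixed $t$, the map $r\mapsto(w_t,\rho_t(r))$ is a bijection onto $[w_t]$, so $\bigcup_{r\in\Sigma}U_r=\bigcup_t[w_t]=[V(P)]=U$; and if $(w_t,\rho_t(r))=(w_{t'},\rho_{t'}(s))$ then $w_t=w_{t'}$ gives $t=t'$ (distinctness of the $w$'s) and then injectivity of $\rho_t$ gives $r=s$, so the $U_r$ are pairwise disjoint. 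Hence $U_1,\dots,U_{|\Sigma|}$ is a partition of $U$, which completes the argument.

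I do not expect a genuine obstacle. The one point that really uses the structure is the pairwise vertex-disjointness of the $P_r$, which boils down to injectivity of the composed permutations $\rho_t$ — equivalently, to the fact that along a single simple path of $G$ the label at one endpoint determines, bijectively, the label at every vertex of the path. And regularity of $P$ enters in exactly one (but essential) place: it is what makes the projection $Q$ a genuine simple path in $G$ rather than a walk revisiting vertices, for which label propagation would not be well defined.
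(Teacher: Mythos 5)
Your proof is correct. It takes a genuinely different, and arguably cleaner, route from the paper's. The paper proceeds by induction on the length of $P$: the base case handles a single edge, and the induction step peels off the last edge, applies the induction hypothesis to the remaining subpath, and then stitches the resulting families of paths together by re-indexing so that the two families agree on the shared set $[w]$. You instead give a direct, non-inductive construction: you project $P$ down to a simple path $w_0\cdots w_\ell$ in $G$ (here regularity is used exactly once, to make the projection injective), observe that adjacency in $H_I$ forces $c_{t+1}=\phi_{w_tw_{t+1},w_t}(c_t)$, and hence that the label at position $t$ is $\rho_t(c_0)$ for the explicit composed permutation $\rho_t$; you then define $P_r$ by replacing $c_0$ with $r$. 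What the paper assembles inductively — namely, the family of parallel paths together with the matching-up of indices across concatenation points — you produce in one stroke via the maps $\rho_t$, and the fact that each $\rho_t$ is a permutation of $\Sigma$ makes the union and disjointness of the $U_r$ immediate. Both proofs rest on the same underlying fact (labels propagate bijectively along a simple path), but your presentation makes that fact explicit and avoids the bookkeeping in the paper's inductive concatenation step.
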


%
%
%

\begin{proof}
The proof is by induction on the length of $P$. In the base case, suppose that $P$ is a single edge which corresponds to the edge $e\in E(G)$. That is, $P=(v_i,u_j)\in E(H_I)$ and $U=[\{v,u\}]$. For each $r\in \Sigma$, we  define $P_r$ to be the edge $(v_r,u_{\phi_{e,v}(r)})$ and $U_r$ to be the set $\{v_r,u_{\phi_{e,v}(r)}\}$. Observe that the statement of the lemma holds with respect to these sets. We now move to the induction step, where $P$ has length at least 2. Let $s\in \Sigma$ and $w\in V(G)$ such that $(w_s,u_j)$ is the last edge of $P$ encountered when traversing $P$ from $v_i$ to $u_j$. We now apply the induction hypothesis on the subpath of $P$ from $v_i$ to $w_s$ and the above argument for the base case on the subpath of $P$ from $w_s$ to $u_j$ which is precisely the edge $(w_s,u_j)$. Let the first subpath be $Q$ and the second subpath $J$.

Let $Q_1,\dots,Q_{|\Sigma|}$ be the paths and $U^Q_1,\dots, U^Q_{|\Sigma|}$ be the partition of $[V(Q)]$ given by the induction hypothesis. Similarly, let $J_1,\dots, J_{|\Sigma|}$ be the paths and $U^J_1,\dots, U^J_{|\Sigma|}$ be the partition of $[V(J)]$ given by our arguments for the base case. Since these are partitions and $[V(Q)]$ and $[V(J)]$ intersect in precisely the set $[w]$, 
 we may assume without loss of generality that for every $r\in \Sigma$, the sets $U^Q_r$ and $U^J_r$ contain the vertex $w_r$.  For each $r\in \Sigma$, we now define $U_r$ to be $U^Q_r\cup U^J_r$ and $P_r$ to be the concatenated path $Q_r \oplus J_r$. Since $Q_r$ is a path with $w_r$ as one endpoint and $J_r$ is a path (indeed an edge) with $w_r$ as an endpoint for each $r\in \Sigma$, the path $P_r$ is well-defined. Further, since the sets $U^Q_1,\dots, U^Q_{|\Sigma|}$ partition the set $[V(Q)]$ and $U^J_1,\dots, U^J_{|\Sigma|}$ partition $[V(J)]$, we conclude that $U_1,\dots, U_{|\Sigma|}$ indeed partition $[V(P)]$ and for each $r$, $V(P_r)=U_r$. 
This completes the proof of the lemma.	
\end{proof}


In the next lemma, we describe additional structural properties of the auxiliary graph. In particular, we establish the relation between various copies of the same vertex set. Intuitively, the following lemma says that for every connected and regular set of vertices $Z$, simply observing the set $N[Z]$ can allow one to make certain useful assertions about the set of vertices in the neighborhood of the set $Z'=[Z]\setminus Z$.

\begin{lem}\label{lem:partial_symmetry}
Let $Z\subseteq V(H_I)$ be a connected regular set of vertices and let $Y=N(Z)$. 
Further, suppose that $N[Z]$ is regular.
 Let $Z'=[Z]\setminus Z$ and $Y'=[Y]\setminus Y$. Then, $Y'\subseteq N(Z')\subseteq [Y]$. Furthermore, for every connected component $C$ in $H_I[Z']$, $N(C)\cap [v]\neq \emptyset$ for every $v\in V(G)$ for which there is a $j\in \Sigma$ such that $v_j\in Y$.
\end{lem}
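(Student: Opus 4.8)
The statement has three parts: (i) $Y' \subseteq N(Z')$, (ii) $N(Z') \subseteq [Y]$, and (iii) for each connected component $C$ of $H_I[Z']$ and each vertex $v\in V(G)$ having a copy in $Y$, $N(C)$ meets $[v]$. The plan is to exploit the matching structure of edge-copies (Observation~\ref{obs:nocommonneighbors}) together with the symmetry of regular paths (Lemma~\ref{lem:path_copies}). First I would fix notation: write $Z\inv$ for the set of vertices of $G$ having a copy in $Z$, so that $[Z] = \bigcup_{v\in Z\inv}[v]$ and, since $Z$ is regular, $Z$ picks out exactly one copy of each $v\in Z\inv$; similarly $Y$ picks one copy of each $u \in Y\inv$, and regularity of $N[Z]$ means these choices are all distinct even across $Z$ and $Y$ (so $Z\inv$ and $Y\inv$ are disjoint and $N[Z]$ contains exactly one copy of each vertex in $Z\inv\cup Y\inv$).

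**Part (ii): $N(Z')\subseteq [Y]$.** Take an edge of $H_I$ from some $z'\in Z'$ to a vertex $w$. This edge is one of the copies $[e]$ of an edge $e=(a,b)\in E(G)$ with $z'$ a copy of (say) $a$; hence $a \in Z\inv$, so $a$ has its designated copy $z_a\in Z$. The edge $e$ then also contributes a copy incident to $z_a$, landing on some copy of $b$. Since $Z$ is connected and $N[Z]$ regular, that copy of $b$ lies in $N[Z]$, i.e. $b\in Z\inv\cup Y\inv$. If $b\in Z\inv$, then $w\in[b]\subseteq[Z]$; but I must rule out $w\in Z$ — this follows because $Z$ is regular and connected while $z'\notin Z$, so a $z'$–$Z$ edge would, via Lemma~\ref{lem:path_copies} applied to a path inside $Z$ plus this edge, force an irregularity in $N[Z]$ (two copies of $b$ adjacent to $Z$); so $w\in Z' \subseteq [Z]\subseteq$... wait, rather: if $b\in Z\inv$ then $w$ is a copy of $b$ hence in $[Y]$ only if $b\in Y\inv$, so actually we conclude $b\in Y\inv$ and $w\in[b]\subseteq[Y]$. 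The clean way is: the edge $e$ from $z_a\in Z$ reaches a vertex of $N[Z]$, which is a copy of some vertex in $Z\inv\cup Y\inv$; regularity of $N[Z]$ plus connectivity of $Z$ forces it into $Y\inv$ (a $Z$–$Z$ edge from $e$ would create a second copy of a $Z\inv$-vertex in the connected regular set, contradicting regularity of $Z$ itself). Hence $b\in Y\inv$ and $w\in[b]\subseteq[Y]$.

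**Part (i): $Y'\subseteq N(Z')$.** Take $y'\in Y' = [Y]\setminus Y$, a copy of some $u\in Y\inv$. Then $u$'s designated copy $y_u\in Y=N(Z)$, so there is an edge $e=(u,a)\in E(G)$ with $a\in Z\inv$ and the corresponding edge-copy joining $y_u$ to $z_a\in Z$. The matching $[e]$ then contains an edge joining $y'$ (the copy of $u$ indexed differently) to the copy $a_{\phi_{e,u}(\cdot)}$ of $a$; by Lemma~\ref{lem:path_copies}-type symmetry — or more directly, since $z_a\in Z\subseteq[Z]$ — that copy of $a$ lies in $[Z]$, and it is not $z_a$ (distinct indices, since $[e]$ is a matching and Observation~\ref{obs:nocommonneighbors} says copies of $a$ share no neighbors), so it lies in $Z' = [Z]\setminus Z$. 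Thus $y'\in N(Z')$.

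**Part (iii): the component condition.** This is the part I expect to be the main obstacle, since it is genuinely about connected components of $H_I[Z']$ rather than just the set $Z'$. The idea: fix a component $C$ of $H_I[Z']$ and $v\in V(G)$ with a copy $v_j\in Y$. I would first use Lemma~\ref{lem:path_copies} to "rotate" $C$: applying the lemma to paths inside the regular connected set $Z$, the copies $Z' = [Z]\setminus Z$ partition into $|\Sigma|-1$ vertex-disjoint connected translates of $Z$, one for each index shift, and these translates are exactly the components of $H_I[Z']$ (here I need that $N[Z]$ regular prevents the translates from touching each other or $Z$). Each translate contains, for each $a\in Z\inv$, exactly one copy of $a$. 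Now since $Z$ is connected and $v_j\in Y=N(Z)$, there is $a\in Z\inv$ and an edge $e=(v,a)$ giving the $H_I$-edge $(v_j, z_a)$ with $z_a\in Z$. Transporting through the translate structure: the component $C$ contains some copy $a_{j'}$ of $a$, and the edge-copy of $e$ incident to $a_{j'}$ lands on a copy $v_{j''}$ of $v$; by Observation~\ref{obs:nocommonneighbors} and disjointness of translates, $j''\neq j$, so $v_{j''}\in[v]\setminus\{v_j\}$ and $v_{j''}\in N(C)$, giving $N(C)\cap[v]\neq\emptyset$. The delicate points to nail down are that the components of $H_I[Z']$ are precisely the index-shifted translates of $Z$ (which is where Lemma~\ref{lem:path_copies} and the regularity of $N[Z]$ do the real work), and that the edge-copy incident to $a_{j'}$ stays within the translate's neighborhood.
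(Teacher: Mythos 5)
Parts (i) and (ii) roughly track the paper's argument (the paper works directly with a vertex $x_i\in Y'$ or $x_i\in Z'$ and its copy $x_j$ in $Y$ or $Z$, then uses Observation~\ref{obs:nocommonneighbors}), although your ``clean way'' for (ii) is stated incorrectly: a $Z$--$Z$ edge arising from $[e]$ does not by itself violate the regularity of $Z$, since a single connected regular set can certainly contain both endpoints of some copy of an edge of $G$. What actually rules out $b\in Z^{-1}$ is that the copy of $b$ adjacent to $z_a$ would then equal the unique $Z$-copy of $b$ (by regularity of $N[Z]$), forcing $w$ to be a \emph{different} copy of $b$ inside $[Z]\setminus Z=Z'$, contradicting $w\notin Z'$. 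That is essentially the argument you gave first, and it is fine; the ``clean way'' rewrite is not.

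The genuine gap is in part (iii). Your approach rests on the claim that $Z'$ decomposes into $|\Sigma|-1$ vertex-disjoint regular ``translates'' of $Z$, one per index shift, which are exactly the components of $H_I[Z']$. This is false under the lemma's hypotheses. Take $\Sigma=\{1,2,3\}$, $G$ a triangle on $a,b,c$ (plus, if you want $Y\ne\emptyset$, a pendant vertex $d$ attached to $a$), with $\phi_{ab,a}=\phi_{bc,b}=\mathrm{id}$ and $\phi_{ca,c}$ the transposition $(2\,3)$ fixing $1$. Then $Z=\{a_1,b_1,c_1\}$ is connected and regular, and $N[Z]$ is regular, but $H_I[Z']$ is a single $6$-cycle $a_2\,b_2\,c_2\,a_3\,b_3\,c_3$, i.e.\ one component containing \emph{two} copies of every $Z^{-1}$-vertex, not two disjoint translates. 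The monodromy around a cycle in $Z$ can permute the remaining labels nontrivially, so Lemma~\ref{lem:path_copies} (which is about a single regular \emph{path}) does not globalize to a decomposition of $Z'$. The paper sidesteps this by proving only the weaker inner claim it actually needs: by induction along a shortest $Z$-internal path, if a component $C$ of $H_I[Z']$ contains some copy $t_\alpha$, then for every $p_\gamma\in Z$ some copy $p_\delta$ also lies in $C$. That gives ``$C$ contains at least one copy of $a$'' without asserting regularity of $C$, after which your final step (pushing the edge $(v_j,z_a)$ to an edge from $a_{j'}\in C$ to a distinct copy of $v$ lying outside $[Z]$) goes through. So your concluding move is sound, but the structural premise feeding it needs to be replaced by the paper's induction.
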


\begin{proof}
	We begin by arguing that $Y'\subseteq N(Z')$. That is, for every vertex $a\in Y'$, there is a vertex $b\in Z'$ such that $(a,b)\in E(H_I)$. Consider a vertex $a=x_i\in Y'$ where $x\in V(G)$ and $i\in \Sigma$. By the definition of $Y'$, there is a $j\in \Sigma$ such that $i\neq j$ and $x_j\in Y$. Since $Y=N(Z)$, it must be the case that there is a $y\in V(G)$ and $r\in \Sigma$ such that $(y_r,x_j)\in E(H_I)$ and $y_r\in Z$. Now, by the definition of $Z'$, we know that for every $s\in \Sigma \setminus \{r\}$, the vertex $y_s\in Z'$. Furthermore, the presence of the edge $(y_r,x_j)\in E(H_I)$ implies the presence of an edge $(x_i,y_\ell)\in E(H_I)$ for some $\ell\in \Sigma$. From Observation \ref{obs:nocommonneighbors}, we infer that $\ell\neq r$. Since we have already argued that $y_\ell\in Z'$, we conclude that $x_i\in N(Z')$.

  We now argue that $N(Z')\subseteq [Y]$. For this, we need to show that for every $a\in Z'$ and $b\notin Z'$ such that $(a,b)\in E(H_I)$, it must be the case that $b\in [Y]$.  Consider a vertex $x_i\in Z'$ where $x\in V(G)$ and $i\in \Sigma$ and a vertex $y_r\notin Z'$ for some $y\in V(G)$ and $r\in \Sigma$ such that $(x_i,y_r)\in E(H_I)$. By the definition of $Z'$, there is a $j\in \Sigma$ such that $i\neq j$ and $x_j\in Z$. Now, due to the edge $(x_i,y_r)$, we have the existence of the edge $(x_j,y_s)\in E(H_I)$ for some $s\in \Sigma$. Due to Observation \ref{obs:nocommonneighbors}, we know that $s\neq r$ since $y_r$ is already neighbor to $x_i$. Therefore, if $y_s\in Z$, then $y_r$ would be in $Z'$, a contradiction. This allows us to infer that $y_s\notin Z$, implying that $y_s\in N(Z)=Y$. But this means that $y_r\in [Y]$, completing the proof of this statement as well.
  
  Finally, we address the last statement of the lemma. That is, the neighborhood of each connected component induced by the set $Z'$ contains at least one copy of every vertex of $Y$.  For this, we require the following claim.

     	\begin{claim}
     	For any vertex $t\in V(G)$ and label $\alpha\in \Sigma$, if there is a component $C$ of $H_I[Z']$ containing $t_\alpha$, then there is a label $\beta\in \Sigma$ such that $t_\beta\in Z$ and furthermore, for every $p\in V(G)$ and $\gamma\in \Sigma$, if $p_\gamma\in Z$ then there is a $\delta\in \Sigma$ such that $p_\delta$ is in $C$.
     \end{claim}

\begin{proof}

It follows from the definition of $Z'$ that if $t_\alpha$ is in $Z'$, then there must be a label $\beta\neq \alpha$ such that $t_\beta\in Z$. Now, suppose that $p_\gamma\in Z$. Since $Z$ is connected, there is a path from $t_\beta$ to $p_\gamma$ contained within $Z$. We prove the statement of the claim by induction on the length of a shortest path between these 2 vertices which is contained within $Z$.

 Let $P$ be such a shortest path and in the base case, suppose that $P$ is an edge. That is, $(t_\beta,p_\gamma)\in E(H_I)$. Then, the definition of $H_I$ implies the existence of a $\delta$ such that $(t_\alpha,p_\delta)\in E(H_I)$. Furthermore, by Observation \ref{obs:nocommonneighbors}, we know that $\delta\neq \gamma$. Since $C$ is a connected component of $Z'$ and both $t_\alpha$ and $p_\delta$ are in $Z'$, we conclude that $p_\delta$ is in $C$. We now perform the induction step where $P$ is a path of length at least 2, assuming our statement holds for all paths of length at most $|P|-1$. 
 
        Let $w\in V(G)$ and $r\in \Sigma$ such that the last edge on this path when traversing from $t_\beta$ to $p_\gamma$ is the edge $(w_r,p_\gamma)$. Then, by the induction hypothesis, there is an $\ell\in \Sigma$ such that $w_\ell$ is in $C$. Now, invoking the same argument as above, we infer the existence of a $\delta\in \Sigma$ such that $p_\delta$ is also in $C$, completing the proof of the claim.  
  \end{proof}

  	We now complete the proof of the final statement of the lemma.  Consider a connected component $C$ of the graph $H_I[Z']$ and consider the set $N(C)$. 
 Suppose that for some $v\in V(G)$ and $j\in \Sigma$, $v_j\in Y$. Consider a vertex $t_\alpha$ in $C$ where $t\in V(G)$ and $\alpha\in \Sigma$. Then, by the above claim, there is a $\beta\in \Sigma$ such that $t_\beta\in Z$. Since $Y=N(Z)$, we infer the existence of a $p_\gamma$ in $Z$ which is adjacent to $v_j$ in $H_I$. Invoking the above claim again, we infer the existence of a $\delta\in \Sigma$ such that $p_\delta$ is in $C$. However, by the definition of $H_I$, the presence of the edge $(v_j,p_\gamma)$ implies the presence of an edge $(v_i,p_\delta)$ for some $i\in \Sigma$. Observe that $v_i$ cannot be in $Z'$. This is because if $v_i\in Z'$, then $N[Z]$ is not regular,
  a contradiction to the premise of the lemma. 
 Therefore, it must be the case that $v_i\in N(C)$, implying that $N(C)\cap [v]\neq\emptyset$. This completes the proof of the lemma.	
\end{proof}

%

Using the observations and structural lemmas proved so far, we will now give a forbidden-structure characterization of  {\Yes} instances of {\labelcoverv}.

	   \begin{lem}\label{lem:characterization}
	   Let $I=(G,0,\phi,\tau)$ be a {\Yes} instance of {\labelcoverv} where $G$ is connected. Let $v\in V(G)$ and $i\in \Sigma$. Then, there is a feasible labeling $\Psi$ such that $\Psi(v)=i$  if and only if there is no $j\in \Sigma$ such that $v_i$ and $v_j$ are in the same connected component of $H_I$.
	  \end{lem}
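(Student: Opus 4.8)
The plan is to prove the two implications separately; the forward one is immediate, and all the real work is in the converse. For the forward direction, suppose $\Psi$ is a feasible labeling with $\Psi(v)=i$ and suppose, for contradiction, that some copy $v_j$ with $j\ne i$ lies in the same connected component of $H_I$ as $v_i=v_{\Psi(v)}$. Applying Observation~\ref{obs:color_propagation} with the second vertex ``$u$'' there taken to be $v$ itself yields $\Psi(v)=j$, contradicting $\Psi(v)=i\ne j$. (We read the right-hand side of the statement, as intended, with $j\ne i$; for $j=i$ it is vacuous.)

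For the converse, assume that no copy $v_j$ with $j\ne i$ lies in the connected component $C$ of $H_I$ containing $v_i$, and let me manufacture a feasible labeling $\Psi$ with $\Psi(v)=i$ directly from $C$. First I would record the easy fact that $C$ meets $[u]$ for every $u\in V(G)$: starting from $v_i\in C$ and following any path of $G$ to $u$, each step uses that a copy of a vertex $w$ lying in $C$ has, along every edge $(w,w')$ of $G$, a neighbour in $[w']$, which is therefore forced into $C$. The crux is the \textbf{Claim} that in fact $|C\cap[u]|=1$ for every $u\in V(G)$. Granting the Claim, I would define $\Psi(u)$ to be the unique label with $u_{\Psi(u)}\in C$; then $\Psi(v)=i$, and $\Psi$ is feasible because, for every edge $(u,w)\in E(G)$, the vertex $u_{\Psi(u)}\in C$ is adjacent in $H_I$ to $w_{\phi_{uw,u}(\Psi(u))}$, which is thus in $C$ and hence equals $w_{\Psi(w)}$, i.e. $(\Psi(u),\Psi(w))\in\phi_{uw,u}$.

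The main obstacle is the Claim, and I would prove it using the path-lifting (monodromy) structure of $H_I$. The key point is that a walk $(x^0,\dots,x^\ell)$ in $G$ together with a label $c^0\in\Sigma$ for $x^0$ lifts to a \emph{unique} walk $(x^0_{c^0},\dots,x^\ell_{c^\ell})$ of $H_I$, obtained by repeatedly applying the permutation $\phi_{e_t,x^t}$ of the edge $e_t=(x^t,x^{t+1})$; that reversing the walk reverses its lift (using the hypothesis $\phi_{e,u}=\phi_{e,v}\inv$); and that, since each $\phi_{e_t,x^t}$ is a bijection, two lifts of one and the same walk from distinct starting labels end at distinct vertices of $H_I$. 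Now suppose for contradiction that $u_d,u_{d'}\in C$ with $d\ne d'$. As $C$ is connected, pick a walk $W_1$ in $C$ from $v_i$ to $u_d$ and a walk $W_2$ in $C$ from $u_{d'}$ to $v_i$; projecting $W_2$ to $G$ gives a walk from $u$ to $v$, and since $W_2$ is the lift of this projection from $u_{d'}$ and ends at $v_i$, the lift of the same projection from $u_d$ ends at a vertex distinct from $v_i$, necessarily some $v_{i'}$ with $i'\ne i$ — call this walk $W_2'$. Then $W_1$ followed by $W_2'$ is a walk of $H_I$ starting at $v_i$ and ending at $v_{i'}$, so all of its vertices, in particular $v_{i'}$, lie in $C$; as $i'\ne i$ this contradicts our assumption. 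Hence $|C\cap[u]|\le 1$, and together with the observation above this gives $|C\cap[u]|=1$ for every $u$, completing the proof.
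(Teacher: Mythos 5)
Your proof is correct, and the converse direction takes a genuinely different (and in my view cleaner) route than the paper's. The paper splits into two cases depending on whether $|[u]\cap C|=1$ for all $u$, defines $\Psi$ in the first case, and in the second case derives a contradiction by extracting a \emph{regular} subpath from a $v_i$–$u_\alpha$ path in $C$ and invoking Lemma~\ref{lem:path_copies} (the ``path copies'' lemma) on that regular subpath to produce a parallel path from some $v_j$, $j\ne i$, into $C$. You instead prove the key Claim $|C\cap[u]|\le 1$ directly via a monodromy/unique-lifting argument for \emph{walks}: a walk in $G$ lifts uniquely to $H_I$ once a starting label is fixed, the walk-lift map is label-injective because it is a composition of the bijections $\phi_{e,u}$, and lifting a return walk from $u_{d'}$ to $v_i$ instead from $u_d$ lands at some $v_{i'}\ne v_i$ lying in $C$, a contradiction. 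Working with walks rather than paths sidesteps the regularity bookkeeping entirely (the paper needs the regular-subpath extraction precisely because Lemma~\ref{lem:path_copies} is stated for regular paths, and that extraction step is the most delicate part of their argument), and avoids any dependence on Lemma~\ref{lem:path_copies}; the trade-off is that the paper's route reuses a lemma it needs elsewhere anyway, whereas yours introduces a small standalone observation. Both are valid; yours is more elementary and self-contained.
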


\begin{proof}
  Consider the forward direction of the lemma. That is, the claim that if there is a feasible labeling $\Psi$ such that $\Psi(v)=i$ then, there is no $j$ such that $v_i$ and $v_j$ are in the same connected component of $H_I$. The contra-positive of this statement is the following. If there is a $j\in \Sigma$ such that $v_i$ and $v_j$ are in the same connected component of $H_I$ then there is no feasible labeling $\Psi$ such that $\Psi(v)=i$. By applying the statement of Observation \ref{obs:color_propagation} on the vertices $v_i$ and $v_j$, we conclude that the forward direction holds.

	We now argue the converse direction. That is, if there is no $j\in \Sigma$ such that $v_i$ and $v_j$ are connected in $H_I$, then there is a feasible labeling $\Psi$ such that $\Psi(v)=i$. 
	Observe that since $G$ is connected, for every vertex $u\in V(G)$, there is at least one vertex of $[u]$ in the same component as $v_i$ in $H_I$, call it $C$. 
	We now consider 2 cases. In the first case, $|[u]\cap C|=1$ for every $u\in V(G)$. In the second case, there is a vertex $u\in V(G)$ such that $|[u]\cap C|>1$.

	\begin{description}
		\item[Case 1:] 
		We  define the labeling $\Psi$ on $V(G)$ as follows. For every $u\in V(G)$, let $\Psi(u)=r$ if and only if $u_r\in C$. Clearly, $\Psi$ is well-defined. We claim that $\Psi$ is in fact a feasible labeling with $\Psi(v)=i$. It is clear from the definition of $\Psi$ that $\Psi(v)=i$. Hence it only remains to prove that $\Psi$ is feasible. Suppose that $\Psi$ is not feasible and let $e=(p,q)\in E(G)$ such that $\Psi(p)=\alpha$ and $\Psi(q)=\beta$ do not match. That is, $\beta\neq \phi_{e,p}(\alpha)$. By the definition of $\Psi$, $p_\alpha$ and $q_\beta$ are both in $C$. Furthermore, by the definition of $H_I$, $(p_\alpha,q_\beta)\notin E(H_I)$ and there is a $\gamma \neq \beta$ such that $(p_\alpha,q_\gamma)\in E(H_I)$. However, this implies that $q_\beta,q_\gamma \in C$, a contradiction to our assumption that $|C\cap [q]|=1$. Therefore, we conclude that $\Psi$ is indeed a feasible labeling setting $\Psi(v)=i$, completing the argument for this case.

		\item[Case 2:] In this case, there is a $u\in V(G)$ such that $|[u]\cap C|>1$. Let $\alpha,\beta\in \Sigma$ be such that $u_\alpha,u_\beta\in C$. Consider a path $P$ from $v_i$ to $u_\alpha$. Let $P_1$ be a regular subpath of $P$ with endpoints $v_i$ and $w_\gamma$ where $|[w]\cap C|>1$ and we choose $\delta\in \Sigma$ such that $w_\delta\in C$. If $P$ is already regular then $P_1=P$, $w=u$, $\gamma=\alpha$ and $\delta=\beta$. However, if $P$ is not regular, then choose $w_\gamma$ to be the vertex on $P$ closest to $v_i$ such that $|[w]\cap C|>1$ and $w_\delta$ to be another vertex in $[w]$ which lies on $P$. Since by the premise of the statement, for no $j\in \Sigma$ is the vertex $v_j\in C$, the path $P$ contains at least 1 edge.
		  Now, we apply Lemma \ref{lem:path_copies} on the regular path $P_1$ and obtain regular paths $P'_1,\dots, P'_{|\Sigma|}$ with each path having as one of its endpoints a unique vertex from $[v]$ and the other endpoint a unique vertex from $[w]$. Since one of these paths is $P_1$ itself, we assume without loss of generality that $P_1'=P_1$, which is a path from $v_i$ to $w_\gamma$ and $P_2'$ is a path from $v_j$ to $w_\delta$ for some $j\in \Sigma$ where $j\neq i$. Since $w_\gamma$ and $w_\delta$ are in $C$, we infer that $v_i$ and $v_j$ are also in $C$, contradicting the premise.

	\end{description}

	This completes the argument for the second case as well and hence the proof of the converse direction of the lemma.
%
%
%
%
%
%
%
%
%
%
%
%
%
%
%
%
%
%
%
%
%
%
%
\end{proof}

	  In the next lemma,  we extend the statement of the previous lemma to include a description of {\Yes} instances where $k=0$ and the given graph has a feasible labeling that is consistent with a given function $\tau$.

\begin{lem}\label{lem:main_characterization}
	
Let $I=(G,0,\phi,\tau)$ be an instance of {\labelcoverv}. Then, $I$ is a {\Yes} instance if and only if 
for every vertex $v\in V(G)$, there is an $i\in \Sigma$ such that there is no path in $H_I$ from $v_i$ to $v_j$ for any $j\neq i$. Moreover, if there is a feasible labeling $\Psi$ for $G$ consistent with $\tau$ such that $\Psi(v)=i$ then there is no vertex $u\in V(G)$ and label $j\in \Sigma\setminus\tau(u)$ such that there is a path in $H_I$ from $v_i$ to $u_j$.
\end{lem}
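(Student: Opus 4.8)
The plan is to derive both directions of the equivalence from the componentwise feasibility characterization (Lemma~\ref{lem:characterization}) together with the label-propagation principle (Observation~\ref{obs:color_propagation}), and then to read the ``moreover'' part off as a one-line consequence of the latter. Recall that with $k=0$, ``$I$ is a {\Yes} instance'' just means that $G$ admits a feasible labeling consistent with $\tau$; in the regime in which this lemma is used, $\tau\equiv\Sigma$, so a feasible labeling is automatically $\tau$-consistent. I would also note at the outset that since each connected component $G'$ of $G$ contributes its own block $[V(G')]$ of vertices of $H_I$ with no edges leaving it, the subgraph of $H_I$ on $[V(G')]$ is exactly the auxiliary graph of the instance restricted to $G'$, so connectivity statements in $H_I$ and in the per-component auxiliary graphs coincide.

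\emph{Forward direction.} Suppose $I$ is a {\Yes} instance and let $\Psi$ be a feasible labeling. Fix $v\in V(G)$ and put $i=\Psi(v)$. If some $v_j$ with $j\neq i$ were reachable from $v_i$ in $H_I$, then Observation~\ref{obs:color_propagation}, applied to $\Psi$ with $u:=v$, would give $\Psi(v)=j\neq i$, a contradiction; so this $i$ witnesses the stated condition at $v$, and since $v$ is arbitrary this yields the forward direction.

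\emph{Converse direction.} Assume that for every $v\in V(G)$ there is a label $i_v$ with no $v_j$, $j\neq i_v$, reachable from $v_{i_v}$ in $H_I$. I would build a feasible labeling one connected component of $G$ at a time: for a component $G'$, pick any $v\in V(G')$ and apply the converse direction of Lemma~\ref{lem:characterization} to $G'$ with the pair $(v,i_v)$, obtaining a feasible labeling of $G'$. The one point to be careful about is that this direction of Lemma~\ref{lem:characterization} does not presuppose that $G'$ is a {\Yes} instance --- its proof uses only the connectivity of $G'$ --- so the invocation is not circular. Gluing the per-component labelings together yields a feasible labeling of $G$, which (being $\tau$-consistent when $\tau\equiv\Sigma$) witnesses that $I$ is a {\Yes} instance.

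\emph{The ``moreover'' statement.} Let $\Psi$ be a feasible labeling of $G$ consistent with $\tau$ with $\Psi(v)=i$, and suppose toward a contradiction that there are $u\in V(G)$ and $j\in\Sigma\setminus\tau(u)$ with a path in $H_I$ from $v_i$ to $u_j$, i.e. $v_i$ and $u_j$ lie in one connected component of $H_I$. The existence of such a $\Psi$ already makes $I$ a {\Yes} instance, so Observation~\ref{obs:color_propagation} applies and gives $\Psi(u)=j$; but consistency with $\tau$ forces $\Psi(u)\in\tau(u)$, so $j\in\tau(u)$, contradicting $j\notin\tau(u)$. Overall I expect the argument to be light bookkeeping on top of the earlier results, the only spot genuinely needing attention being the non-circular use of the converse half of Lemma~\ref{lem:characterization} in the converse direction above.
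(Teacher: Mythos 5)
Your proof is correct and follows essentially the same route as the paper's: forward direction via Observation~\ref{obs:color_propagation}, converse by choosing one vertex per connected component and invoking the converse half of Lemma~\ref{lem:characterization}, and the ``moreover'' part again via Observation~\ref{obs:color_propagation}. One small caution: your aside that ``in the regime in which this lemma is used, $\tau\equiv\Sigma$'' is not accurate --- the lemma (and its consequence Lemma~\ref{lem:separator_characterization}) are invoked later with $\tau$ already restricted by the branching --- but this aside is unnecessary anyway, since the ``if and only if'' condition makes no reference to $\tau$ and is therefore best read as a characterization of when a feasible labeling exists at all, with $\tau$-consistency handled separately by the ``moreover'' clause, exactly as the paper does.
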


\begin{proof}
	Suppose that $I$ is a {\Yes} instance and let $\Psi$ be a  feasible labeling of $G$. Let $v\in V(G)$ and let $i=\Psi(v)$. Then, Observation \ref{obs:color_propagation} implies that $H_I$ contains no $v_i$-$v_j$ path for any $j\neq i$.

	Conversely, suppose that for every vertex $v\in V(G)$, there is a $i_v\in \Sigma$  such that there is no path in $H_I$ from $v_{i_v}$ to $v_j$ for any $\Sigma\ni j\neq i_v$. For each connected component $C$ of $G$, 
	    we pick an arbitrary vertex $w$ and apply Lemma \ref{lem:characterization} to conclude that there is a feasible labeling $\Psi_C$ for this component which sets $\Psi_C(w)=i_w$. Finally the labeling $\Psi$ defined as the union of the labelings $\{\Psi_C\}_{C\in {Comp}(G)}$ is a feasible labeling for $G$, completing the proof of the first statement of the lemma.

	    Now, suppose that $\Psi$ is a feasible labeling of $G$ consistent with $\tau$ and $\Psi(v)=i$. Suppose that $v_i$ is in the same component as $u_j$ where $u\in V(G)$ and $j\in \Sigma\setminus \tau(u)$. Then, Observation \ref{obs:color_propagation} implies that $\Psi(u)=j\notin \tau(u)$, a contradiction to our assumption that $\Psi$ is  consistent with $\tau$. This completes the proof of the lemma.
	\end{proof}

So far, we have studied the structure of {\Yes} instances of this problem when the budget $k=0$.
The next lemma is a direct consequence of Lemma \ref{lem:main_characterization} and allows us to characterize {\Yes} instances of the problem for values of $k$ greater than 0.

\begin{lem}\label{lem:separator_characterization}
	Let $I=(G,k,\phi,\tau)$ be an instance of {\labelcoverv}. Then, $I$ is a {\Yes} instance if and only if there is a set $S\subseteq V(G)$ of at most $k$ vertices such that for every $v\in V(G)\setminus S$, there is an $i_v\in \Sigma$ such that $[S]$ intersects all paths from $v_{i_v}$ to $v_j$ for every $\Sigma\ni j\neq i_v$ in the graph $H_I$. Moreover if there is a feasible labeling for $G-S$ consistent with $\tau$ that labels $v$ with the label $i\in \tau(v)$ then for every $u\in V(G)$ and $j\in \Sigma\setminus \tau(u)$, $[S]$ intersects all $v_i$-$u_j$ paths.
\end{lem}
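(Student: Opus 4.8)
The plan is to deduce the statement from Lemma~\ref{lem:main_characterization} by deleting the hypothetical solution and passing to the auxiliary graph of the resulting graph. The only structural fact I would need to record first is the following: for any $S\subseteq V(G)$, let $I'$ denote the instance $(G-S,0,\phi',\tau')$, where $\phi'$ and $\tau'$ are the restrictions of $\phi$ and $\tau$ to $G-S$. Then $H_{I'}$ is exactly the induced subgraph $H_I-[S]$. Indeed, the vertex set of $H_{I'}$ is $(V(G)\setminus S)\times\Sigma=V(H_I)\setminus[S]$; and an edge $e=(a,b)$ of $G$ contributes the set of edges $[e]$ to $H_I$, all of which survive the deletion of $[S]$ precisely when $a,b\notin S$, i.e.\ precisely when $e\in E(G-S)$, which is exactly the set of edges of $G$ that contribute edges to $H_{I'}$. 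Consequently, for vertices $p_\alpha,q_\beta\notin[S]$, there is a $p_\alpha$-$q_\beta$ path in $H_{I'}$ if and only if $[S]$ fails to intersect some $p_\alpha$-$q_\beta$ path in $H_I$.

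With this observation both directions of the equivalence are immediate. For the forward direction, if $I$ is a {\Yes} instance then by definition there is $S\subseteq V(G)$ with $|S|\le k$ such that $G-S$ admits a feasible labeling consistent with $\tau$; that is, $I'$ is a {\Yes} instance. Applying Lemma~\ref{lem:main_characterization} to $I'$ and using the observation above (valid since $v_{i_v},v_j\notin[S]$ whenever $v\notin S$), we get that for every $v\in V(G)\setminus S$ there is $i_v\in\Sigma$ such that $[S]$ intersects every $v_{i_v}$-$v_j$ path in $H_I$, for all $\Sigma\ni j\ne i_v$. Conversely, given such a set $S$ of size at most $k$, the observation shows that $H_{I'}=H_I-[S]$ satisfies the hypothesis of Lemma~\ref{lem:main_characterization}, so $I'$ is a {\Yes} instance; hence $G-S$ has a feasible labeling consistent with $\tau$ and, since $|S|\le k$, $I$ is a {\Yes} instance.

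For the ``moreover'' part, suppose there is a feasible labeling of $G-S$ consistent with $\tau$ that sends $v$ to $i\in\tau(v)$. Applying the ``moreover'' clause of Lemma~\ref{lem:main_characterization} to $I'$ (noting $\tau'(u)=\tau(u)$ for $u\in V(G)\setminus S$), there is no $u\in V(G)\setminus S$ and $j\in\Sigma\setminus\tau(u)$ with a $v_i$-$u_j$ path in $H_{I'}$, so by the observation $[S]$ intersects every $v_i$-$u_j$ path in $H_I$ for all such $u$ and $j$. The vertices $u\in S$ are handled trivially: then $u_j\in[S]$, so $u_j$ itself lies on every $v_i$-$u_j$ path and these are all intersected by $[S]$. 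This gives the claim for every $u\in V(G)$.

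The argument is essentially bookkeeping, so I do not anticipate a genuinely hard step; the single point that warrants care is the identification $H_{I'}=H_I-[S]$, together with the minor degenerate cases in which an endpoint of the path under consideration already lies in $[S]$ (making the corresponding assertion vacuously true).
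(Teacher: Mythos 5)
Your proof is correct and is exactly the route the paper intends: the paper states this lemma without proof, noting only that it is a ``direct consequence'' of Lemma~\ref{lem:main_characterization}, and your argument supplies the bookkeeping -- the identification $H_{I'}=H_I-[S]$, the equivalence between ``$I$ is a {\Yes} instance'' and ``some $S$ of size at most $k$ makes $(G-S,0,\phi',\tau')$ a {\Yes} instance'', and the trivial handling of $u\in S$ in the moreover part -- that the paper left implicit.
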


Using the above lemma, we will interpret the {\labelcoverv} problem as a parameterized cut-problem and use separator machinery to design a linear-time {\FPT} algorithm for this problem.

\subsection{Defining the associated cut-problem}

We begin by recalling standard definitions of separators in undirected graphs.

\begin{definition}
	Let $G$ be a graph and $X$ and $Y$ be disjoint vertex sets. A set $S$ disjoint from $X\cup Y$ is said to be an $X$-$Y$ \textbf{separator} if there is no $X$-$Y$ path in the graph $G-S$. We denote the vertices in the components of $G-S$ which intersect $X$ by $R(X,S)$ and we denote by $R[X,S]$ the set $R(X,S)\cup S$. We say that an $X$-$Y$ separator $S_1$ \textbf{covers} an $X$-$Y$ separator $S_2$ if $R(X,S_1)\supseteq R(X,S_2)$.
\end{definition}

\begin{definition} Let $I$ be an instance of {\labelcoverv} and let $X$ and $Y$ be disjoint vertex sets of $H_I$.
		We say that a minimal $X$-$Y$ separator $S$ is \textbf{good} if the set $R[X,S]$ is regular and \textbf{bad} otherwise.
\end{definition}

Note that if $S$ is a minimal $X$-$Y$ separator then $N(R(X,S))=S$.
We are now ready to prove the \emph{Persistence} Lemma which plays a major role in the design of the algorithm. In essence this lemma says that if we are guaranteed the existence of a solution whose deletion leaves a graph with a feasible labeling $\Psi$ and if we are given a vertex $v$ excluded from the deletion set which has a single label $\alpha$ in its allowed label set, then we can define a set $T$ such that the solution under consideration  \emph{must} separate $v_\alpha$ from $T$. Furthermore, if we find a \emph{good} minimum $v_\alpha$-$T$ separator $S$, then we can correctly fix the labels of all vertices which have exactly one copy in $R(v_\alpha,S)$. It will be shown later that once we fix the labels of these vertices, the subsequent exhaustive branching steps will decrease a pre-determined measure of the input instance.

\begin{lem}\label{lem:reduction}{\rm [{\bf Persistence Lemma}]}
	Let $I=(G,k,\phi,\tau)$ be a {\Yes} instance of {\labelcoverv}. Let $X\subseteq V(G)$ be a minimal set of size at most $k$ such that $G-X$ has a feasible labeling and let $\Psi$ be a feasible labeling for $G-X$ consistent with $\tau$. Let $v$ be a vertex not in $X$ with $|\tau(v)|=1$ and let $\alpha\in \Sigma$ be such that $\alpha=\Psi(v)$ and $\tau(v)=\{\alpha\}$. Let $T$ denote the set $\bigcup_{u\in V(G)} \bigcup_{\gamma\in \Sigma\setminus \tau(u)} u_\gamma $.

	\begin{itemize} 
	\item $[X]$ is a $v_\alpha$-$T$ separator in $H_I$. 
	\item Let $S$ be a good $v_\alpha$-$T$ minimum separator in $H_I$ and let $Z=R(v_\alpha,S)$.
	    Then, there is a solution for the given instance disjoint from $Z{\inv}$.

	\end{itemize}

\end{lem}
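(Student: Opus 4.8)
The first bullet is essentially immediate from Lemma~\ref{lem:separator_characterization}. Since $X$ is a solution, $G-X$ has a feasible labeling consistent with $\tau$ that labels $v$ with $\alpha \in \tau(v)$; the "moreover" clause of Lemma~\ref{lem:separator_characterization} then states precisely that $[X]$ intersects all $v_\alpha$-$u_j$ paths for every $u\in V(G)$ and $j \in \Sigma\setminus\tau(u)$, i.e.\ $[X]$ is a $v_\alpha$-$T$ separator in $H_I$ (note $v_\alpha \notin T$ because $\alpha \in \tau(v)$, and no vertex of $T$ can be equal to $v_\alpha$). One small point to check: $[X]$ is disjoint from $\{v_\alpha\}\cup T$ — disjointness from $\{v_\alpha\}$ holds because $v\notin X$, and I will note that we may harmlessly assume $X$ is disjoint from the "forbidden" vertices $T\inv$ as well, or simply work with $[X]$ as a separator in the sense that no path survives its deletion.

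For the second bullet, the plan is the standard "pushing" argument for important/minimum separators, transported to the auxiliary graph $H_I$. Let $S$ be a good minimum $v_\alpha$-$T$ separator, $Z = R(v_\alpha,S)$, so $N(Z)=S$ and $Z$ (indeed $Z\cup S = R[v_\alpha,S]$) is regular since $S$ is good. The goal is to produce a solution $X'$ with $X'\inv \cap Z\inv = \emptyset$, i.e.\ $X'$ uses no vertex of $G$ having a copy inside $Z$. The natural candidate is obtained from $X$ by replacing the part of $[X]$ that "lies inside $Z\cup S$" by $S$: concretely, set $X' = (X \setminus Z\inv) \cup S\inv$, or more carefully define $X'$ as $(X\setminus R[v_\alpha,[X]\cap \text{(relevant region)}]\inv)\cup (\text{minimal }v_\alpha\text{-}T\text{ separator }S)\inv$. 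Because $S$ is a \emph{minimum} $v_\alpha$-$T$ separator, $|S| \le |[X] \cap (\text{the closest min cut})|$; combined with the regularity of $S$ this should give $|S\inv| = |S| \le |X \cap Z\inv|$ via the fact that $[X]$ being a $v_\alpha$-$T$ separator forces $|[X]\cap(\text{anything blocking all }v_\alpha\text{-}T\text{ paths})|$ to be large, so $|X'| \le |X| \le k$. The heart of the matter is then to verify that $X'$ is again a solution, i.e.\ that $G-X'$ admits a feasible labeling consistent with $\tau$. Here I would take the restriction $\Psi'$ of $\Psi$ to $V(G)\setminus X'$ on the "far side" (outside $Z$) and use Lemma~\ref{lem:characterization} / Lemma~\ref{lem:main_characterization} together with Lemma~\ref{lem:partial_symmetry} to re-label the vertices with copies inside $Z$: since $S$ separates $v_\alpha$ from all "illegal" copies $T$, the component structure of $H_{I}$ restricted to $Z$ dictates a consistent labeling of $Z\inv$, and Lemma~\ref{lem:partial_symmetry} is what lets us argue that this labeling meshes correctly across the boundary $S$ with $\Psi$ on the outside, so that no edge constraint of $G-X'$ is violated and every vertex is labeled within its $\tau$-set.

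The step I expect to be the main obstacle is precisely this last re-labeling / gluing argument: showing that after deleting $S\inv$ instead of the inside part of $X$, the vertices whose copies lie in $Z$ can be feasibly relabeled in a way that is simultaneously (a) consistent with $\tau$ — using that $Z$ is disjoint from $T$ and that $S$ is a $v_\alpha$-$T$ separator — and (b) compatible along the cut $S$ with the unchanged outside labeling $\Psi$. The regularity of $R[v_\alpha,S]$ (the "good" hypothesis) is doing real work here, since it guarantees each vertex of $Z\inv$ has a unique copy in $Z$ and hence a canonically determined new label, and Observation~\ref{obs:color_propagation} plus Lemma~\ref{lem:partial_symmetry} are the tools I would use to propagate consistency outward. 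A secondary technical point to get right is the cardinality bookkeeping ensuring $|X'| \le k$, which relies on $S$ being a \emph{minimum} (not merely minimal) $v_\alpha$-$T$ separator together with the minimality of $X$.
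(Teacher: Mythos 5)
Your treatment of the first bullet is fine and matches the paper: it is exactly the ``moreover'' clause of Lemma~\ref{lem:separator_characterization} applied with $S=X$, $i=\alpha$, and the observation that $[v]\setminus\{v_\alpha\}\subseteq T$ because $\tau(v)=\{\alpha\}$.

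For the second bullet there is a genuine gap, and it is precisely where you flagged ``a secondary technical point.'' Your candidate replacement $X' = (X\setminus Z\inv)\cup S\inv$ does not come with the cardinality bound you need. The inequality $|S\inv|\le|X\cap Z\inv|$ does \emph{not} follow from $S$ being a minimum $v_\alpha$-$T$ separator: $[X]$ is a $v_\alpha$-$T$ separator, but nothing forces $[X]$ to block the $|S|$ internally disjoint $v_\alpha$-$S$ paths \emph{inside} $R[v_\alpha,S]$. $[X]$ may cut most of those paths only \emph{after} they leave $Z$ (at $S$ or beyond), in which case $|X\cap Z\inv|$ can be strictly smaller than $|S|$ and $|X'|$ may exceed $|X|$. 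The paper's proof is structured around exactly this distinction. It splits on the set $S'$ of vertices in $S\setminus[X]$ that are \emph{not} reachable from $v_\alpha$ inside $Z$ in $H_I-[X]$. If $S'=\emptyset$ (the scenario where the budget argument you want is most in doubt), the paper shows $X$ was not minimal to begin with --- so this case cannot occur. If $S'\neq\emptyset$, the paper swaps in only $(S')\inv$ and proves $|X\cap Z\inv|\ge|(S')\inv|$ by counting along the $|S|$ disjoint $v_\alpha$-$S$ paths, using regularity of $R[v_\alpha,S]$ to pass from vertices of $H_I$ to vertices of $G$. Your sketch never makes the budget argument sound, and the hedged variant you offer (``$X'=(X\setminus R[v_\alpha,\ldots]\inv)\cup\ldots$'') is not concrete enough to close it.

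The feasibility half is also underdeveloped. You propose to re-label $Z\inv$ using the unique copies inside $Z$ and to glue with $\Psi$ across $S$ via Lemmas~\ref{lem:characterization},~\ref{lem:partial_symmetry}; that is the right toolkit, but the actual argument in the paper is a contradiction argument: assuming $G-X'$ has no feasible labeling, one exhibits a regular path that, after applying Lemma~\ref{lem:path_copies} and the reachability structure of $S\setminus S'$, yields a $v_\alpha$--$v_\beta$ walk in $H_I-[X]$ for some $\beta\neq\alpha$, contradicting Observation~\ref{obs:color_propagation} for $\Psi$. You would need to spell out an argument at that level of detail (in particular, you need the regularity claims about the offending path $P$, which is where Observation~\ref{obs:nocommonneighbors} and the reachability of $S\setminus S'$ are used) before the ``gluing'' step can be accepted. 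As written, the proposal identifies the right ingredients but does not establish either the budget bound or the feasibility of the modified solution.
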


\begin{proof} The first statement follows from lemma \ref{lem:separator_characterization}. We now prove the second statement. We begin by observing that $T$ contains the set $[v]\setminus \{v_\alpha\}$. This is simply because $\tau(v)$ is a singleton and only contains the label $\alpha$. As a result, we know that the set $[X]$ must intersect all $v_\alpha$-$v_\beta$ paths for $\alpha\neq \beta$.
 Let $X_1$ denote the set $X\cap Z\inv$. If $X_1$ is empty then we are already done. Therefore, $X_1\neq \emptyset$. Let $S'$ denote the subset of $S\setminus [X]$ which is not reachable from $v_\alpha$ in the graph $H_I-[X]$ via paths whose internal vertices lie in $Z$. We now  have 2 cases depending on $S'$ being empty or non-empty.  We will argue that the first case cannot occur since it contradicts the minimality of $X$. In the second case we use very similar arguments but show that we can modify $X$ to get an alternate solution $X'$ which is disjoint from the set $Z$.

\begin{description}

\item[Case 1: $S'$ is empty.] That is, every vertex in $S\setminus [X]$ is reachable from $v_\alpha$ in $H_I-[X]$ via paths whose internal vertices lie in $Z$.
Let $u\in X_1$ and let $b\in \Sigma$ such that $u_b\in Z$. Since $Z$ is regular, $Z\cap [u]$ must in fact be equal to $\{u_b\}$. We now claim that $X'=X\setminus \{u\}$ is also a set such that $G-X'$ has a feasible labeling, contradicting the minimality of $X$.


Suppose that this is not the case. That is, $G-X'$ does not have a feasible labeling. Since every connected component of $G-X'$ which does not contain $u$ is also a connected component of $G-X$, all such components do have a feasible labeling. Indeed any feasible labeling of $G-X$ restricted to the vertices in these components is a feasible labeling for these components. Therefore, there is a single component in $G-X'$ which does not have a feasible labeling --  the component containing $u$.

By Lemma \ref{lem:characterization}, if there is no $b'\in \Sigma\setminus \{b\} $ such that the connected component of $H_I-[X']$ containing $u_b$ also contains $u_{b'}$, then there is a feasible labeling of the component of $G-X'$ which contains $u$, a contradiction. Therefore, there is a $b'\in \Sigma\setminus \{b\}$ such that there is a $u_b$-$u_{b'}$ path in $H_I-[X']$. If this path contains vertices of $[u]$ other than $u_b$ and $u_{b'}$, then we pick the vertex of $[u]\setminus \{u_b\}$ which is closest to $u_b$ on this path and call it $u_{b'}$. Therefore, the path $P$ from $u_b$ to $u_{b'}$ is internally disjoint from $[u]$. 
We now have the following claim regarding $P$.
\medskip

\begin{claim}
The path $P$ is internally regular.	
\end{claim}

\begin{proof}
Suppose that the path $P$ contains a pair of internal vertices from the set $[l]$ for some vertex $l\in V(G)\setminus \{u\}$. Let these vertices be $l^1$ and $l^2$. We consider the following 2 cases. In the first case, both $l^1$ and $l^2$ are disjoint from $Z$ and in the second, exactly one of them, say $l^1$ is contained in $Z$. Since $Z$ is regular, these are the only 2 possible cases. 

   We begin with the first case. That is, both $l^1$ and $l^2$ are disjoint from $Z$. Since $u_b\in Z$ and $P$ contains $u_b$ and $l^1,l^2$, it must be the case that $P$ intersects $S\setminus [X]$ in a vertex, call it $a$. However, by assumption, there is a path from $v_\alpha$ to $a$ in the graph $H_I-[X]$. Since the internal vertices of $P$ are disjoint from $[X]$,  we obtain a walk (and hence a connected component) in $H_I-[X]$ that contains $v_\alpha$, $l^1$ and $l^2$, contradicting our premise that there is a feasible labeling of $G-X$ which labels $v$ with $\alpha$.
   
   In the second case, since $l^1\in Z$ and $l^2\notin Z$, the subpath of $P$ between $l^1$ and $l^2$ must intersect $S\setminus [X]$ at a vertex, call it $a$. Since $P$ is internally disjoint from $[u]$, we conclude that
 the subpath of $P$ between $l^1$ and $l^2$ is disjoint from $[X]$ and hence present in $H_I-[X]$. However, we assumed that $a$ is reachable from $v_\alpha$ in $H_I-[X]$, implying that $v_\alpha$ can reach both $l^1$ and $l^2$ in $H_I-[X]$, contradicting the existence of a feasible labeling of $G-X$ (Observation \ref{obs:color_propagation}) which labels $v$ with $\alpha$. Hence, we conclude that $P$ is regular and this completes the proof of the claim.
\end{proof}

%
%
%
%
%
%

We now return to the proof of the first case.
Since $u_b\in Z$ and $u_{b'}\notin Z$ (as $N[Z]$ is regular), $P$ must intersect $N(Z)$ which is the same as $S$, in $S\setminus [X]$. Furthermore, $P$ must intersect $N(C)$ where $C$ is the connected component of $Z'=[Z]\setminus Z$ containing the vertex $u_{b'}$. We now have the following 2 subcases based on the intersection of $P$ with the(not necessarily non-empty) set $S\cap N(C)$. In both subcases we will demonstrate the presence of a $v_\alpha$-$v_\beta$ path in $H_I-[X]$ for some $\beta\in \Sigma\setminus \{\alpha\}$.

\begin{figure}[t]
  \includegraphics[height= 200 pt, width=400 pt]{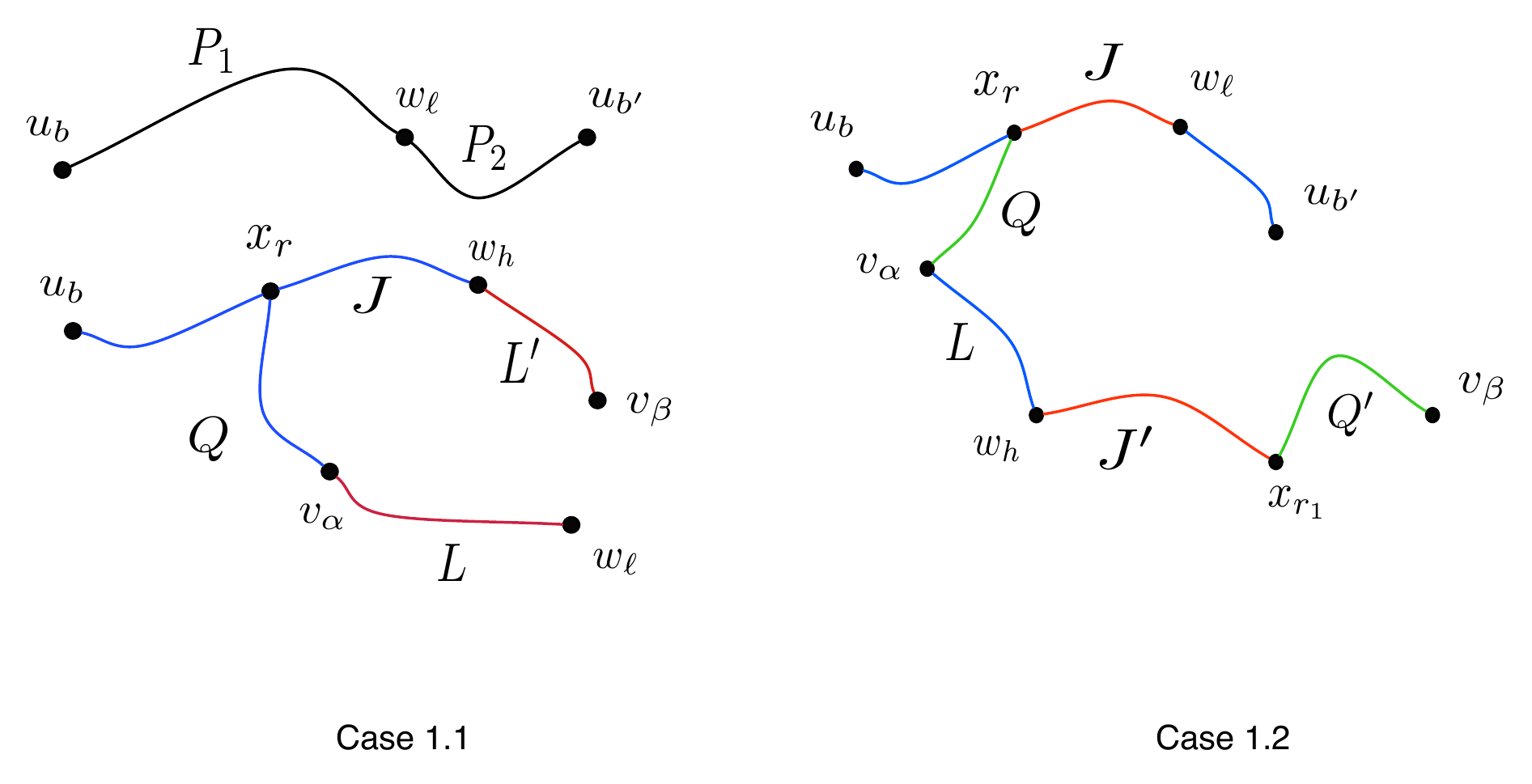}
\caption{Illustrations of the paths used in the arguments in Case 1. Note that contrary to the illustration, some of these paths may intersect.}
  \label{fig:case1}
\end{figure}

\begin{description}
\item[Case 1.1:] \emph{ $P$ contains a vertex in $S\cap N(C)$}. Let $w_\ell$ be a vertex in $S\cap N(C)$ which appears in $P$. We let $P_1$ denote the subpath of $P$ from $u_b$ to $w_\ell$ and $P_2$ denote the subpath of $P$ from $w_\ell$ to $u_{b'}$
 (see Figure \ref{fig:case1}). Furthermore, since $P$ is internally regular, $P_1$ and $P_2$ are regular. We apply Lemma \ref{lem:path_copies} to the regular path $P_2$ to get a path $P_2'$ with $u_b$ as one endpoint and $w_h$ as the other endpoint, where $w_h\neq w_\ell$. Now, since $W_\ell\in N[Z]$ and $N[Z]$ is regular by our assumption, it must be the case that $w_h\notin Z$. Therefore the path $P_2'$ must intersect $S$ at a vertex other than $w_\ell$. Let $x_r$ be such a vertex, where $x\in V(G)$ and $r\in \Sigma$. However, in the case we are in, we know that $x_r$ (which is contained in $S\setminus [X]$) is reachable from $v_\alpha$ in $H_I-[X]$ by a path $Q$ whose internal vertices lie in $Z$. We let the subpath of $P_2'$ from $x_r$ to $w_h$ be denoted by $J$. Furthermore, the case we are in guarantees that $w_\ell$ is reachable from $v_\alpha$ in $H_I- [X]$ via a path $L$ whose internal vertices lie in $Z$. Since $L$ lies completely in $N[Z]$, it is regular and we may apply Lemma \ref{lem:path_copies} on this path to obtain a path $L'$ with $w_h$ s one endpoint and $v_\beta$ as the other endpoint for some $\beta\in \Sigma$. Since we have already argued that $w_h\neq w_\ell$, it follows that $\beta\neq \alpha$. Therefore, we get a concatenated walk $Q+J+L'$ which is a  walk that is present in the graph $H_I-[X]$ and contains $v_\alpha$ and $v_\beta$,  contradicting the premise of the lemma that there is a feasible labeling for $G-X$ setting $v$ to $\alpha$. This completes the argument for this subcase.

\item[Case 1.2:] \emph{$P$ does not contain a vertex in $S\cap N(C)$}. Let $x_r$ be the last vertex of $S$ which is encountered when traversing $P$ from $u_b$ to $u_{b'}$ and let $w_\ell$ be the last vertex of $N(C)$ encountered in the same traversal. Observe that since the previous subcase does not hold, it must be the case that $x_r$ occurs before $w_\ell$ in this traversal. We let $J$ denote the subpath of $P$ between $x_r$ and $w_\ell$. Now, Lemma \ref{lem:partial_symmetry} implies that there is a $h\in \Sigma\setminus \{\ell\}$ such that $w_h\in S$. This is because $N(C)\subseteq [S]$. 
Now, the case we are in guarantees the presence of paths $L$ and $Q$ from $v_\alpha$ to $w_h$ and $x_r$ respectively such that $L$ and $Q$ both lie strictly inside $N[Z]$ and hence are regular. Now, we apply Lemma \ref{lem:path_copies} on the regular path $J$ to get a path $J'$ with $w_h$ as one endpoint and $x_{r_1}$ as the other for some $r_1\in \Sigma$. Since we have already argued that $w_h\neq w_\ell$, it must be the case that $r_1\neq r$. Now, we apply Lemma \ref{lem:path_copies} on the regular path $Q$ to get a path $Q'$ with $x_{r_1}$ as one endpoint and $v_\beta$ as the other for some $\beta\in \Sigma$. Since we have shown that $r_1\neq r$, we infer that $\beta\neq \alpha$. Now, the concatenated walk $L+J'+Q'$ implies the presence of a $v_\alpha$-$v_\beta$ path in $H_I-[X]$, a contradiction to the premise of the lemma. This completes the argument for this subcase.

\end{description}

Thus we have concluded that $G-X'$ has a feasible labeling, contradicting the minimality of $X$. This completes the argument for the first case and we now move on to the second case.

\medskip

\item[Case 2: $S'$ is non-empty.]

Let $\cal Q$ be a set of $\vert S\vert$-many $v_\alpha$-$S$ paths contained entirely in $N[Z]$ which are vertex disjoint except for the vertex $v_\alpha$. Since $S$ is a minimum $v_\alpha$-$T$ separator, such a set of paths exists. Recall that $X_1$ denotes the set $X\cap Z\inv$. We let $\hat X_1$ denote the set $[X]\cap Z$. That is, those copies of $X_1$ present in $Z$. Due to the presence of the set of paths $\cal Q$ and the fact that $v$ is disjoint from $X$, it must be the case that $\hat X_1$ contains at least one vertex in each path in $\cal Q$ that connects $v$ and $S'$. Furthermore, since $S$ is a good separator, we conclude that $|X_1|=|(\hat X_1)\inv|\geq |(S')\inv|$. We now claim that $X'=(X\setminus X_1)\cup (S')\inv$ is also a solution for the given instance. That is, $|X'|\leq |X|$ and $G-X'$ has a feasible labeling. By definition, $|X'|\leq |X|$ holds. Therefore, it remains to prove that $G-X'$ has a feasible labeling.

    Again, it must be the case that any connected component of $G-X'$ which does not have a feasible labeling must intersect the set $X_1$. Any other component of $G-X'$ is contained in a component of $G-X$ and already has a feasible labeling by the premise of the lemma.

        By Lemma \ref{lem:characterization}, there must be a vertex $u^1\in X_1$ and distinct labels $b,b'\in \Sigma$ such that $u^1_b\in Z$ and there is a $u^1_b-u^1_{b'}$ path $P$ in $H_I-[X']$. We now consider the intersection of $P$ with the set $[X_1]$ and let $p_{\gamma_1}$ and $q_{\gamma_2}$ be vertices on $P$ such that $p_{\gamma_1},q_{\gamma_2}\in [Z]$, the subpath of $P$ from $p_{\gamma_1}$ to $q_{\gamma_2}$ is internally disjoint from $[X_1]$ \emph{and} $p_{\gamma_1}\in Z$ and $q_{\gamma_2}\notin Z$. We first argue that such a pair of vertices exist.

           We begin by setting $p_{\gamma_1}=u^1_b$ and $q_{\gamma_2}=u^1_{b'}$. If the path $P$ is already internally disjoint from $[X_1]$ then we are done. Otherwise, let $u^2_c$ be the vertex of $[X_1]$ closest to $p_{\gamma_1}$ along the subpath between $p_{\gamma_1}$ and $q_{\gamma_2}$. Now, if $u^2_c$ is not in $Z$ then we are done by setting $q_{\gamma_2}=u^2_c$. Otherwise, we continue by setting $p_{\gamma_1}=u^2_c$. Since this process must terminate, we conclude that the vertices $p_{\gamma_1}$ and $q_{\gamma_2}$ with the requisite properties must exist.

        For ease of notation we will now refer to the path between $p_{\gamma_1}$ and $q_{\gamma_2}$ as $P$. Note that by definition, $P$ is internally disjoint from $[X_1]$. We now have a claim identical to that in the previous case.
        
        \medskip
        
        \begin{claim}
        The path $P$ is internally regular.	
        \end{claim}

\begin{proof}
 The proof of this claim is identical to the previous case and only uses the fact that one endpoint of $P$ is inside $Z$, the other outside $N[Z]$ and that $P$ is internally disjoint from $[X_1]$. Using these properties, one can argue that if $P$ is not internally regular, then $v_\alpha$ is in the same component as a pair of vertices in $[l]$ in the graph $H_I-[X]$, for some $l\in V(G)$, contradicting the premise of the lemma.
 \end{proof}
 
 We now complete the proof of this case. Since $p_{\gamma_1}\in Z$ and $q_{\gamma_2}\in [Z]\setminus Z$, $P$ must intersect $N(Z)$ in $(S\setminus [X])\setminus S'$. Furthermore, $P$ must also intersect $N(C)$ where $C$ is the connected component of $H_I[Z']$ containing $q_{\gamma_2}$, where $Z'=[Z]\setminus Z$. We again consider 2 subcases based on the intersection of the path $P$ with the (not necessarily non-empty) set $N(C)\cap S$.
 
 \begin{figure}[t]
  \includegraphics[height= 200 pt, width=420 pt]{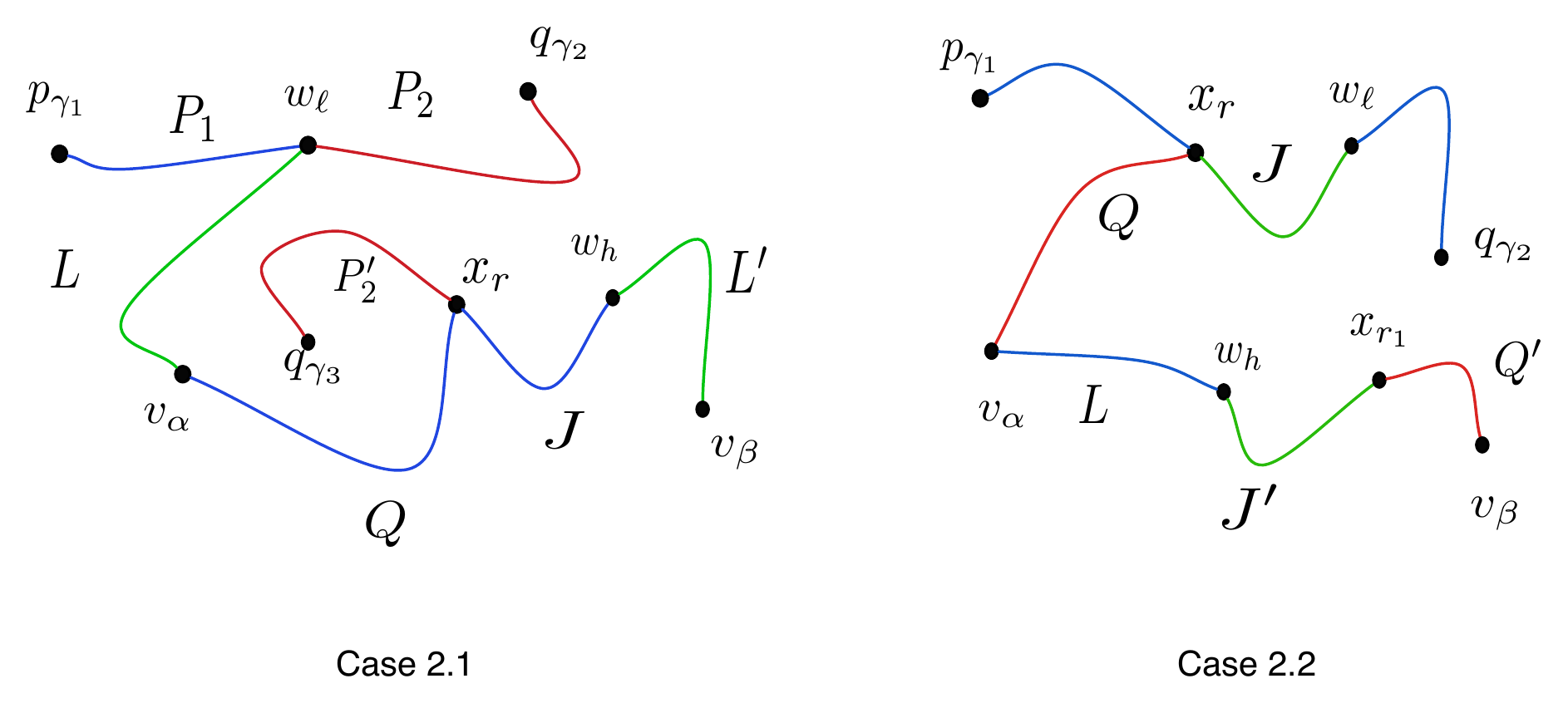}
  \caption{Illustrations of the paths used in the arguments in Case 2. Note that contrary to the illustration, some of these paths may intersect.}
  \label{fig:case2}
\end{figure}

 \begin{description}
 \item[Case 2.1:] \emph{ $P$ contains a vertex in $S\cap N(C)$.} Let $w_\ell$ be a vertex in $S\cap N(C)$ which appears in $P$. We let $P_1$ denote the subpath of $P$ from $p_{\gamma_1}$ to $w_\ell$ and $P_2$ denote the subpath of $P$ from $w_\ell$ to $q_{\gamma_2}$ (see Figure \ref{fig:case2}). Since $P$ is internally regular, $P_1$ and $P_2$ are regular. Furthermore, since $q_{\gamma_2}\notin Z$, there is a $\gamma_3\in \Sigma\setminus \{\gamma_2\}$ such that $q_{\gamma_3}\in Z$. We now apply Lemma \ref{lem:path_copies} on the regular path $P_2$ to get a path $P_2'$ with $q_{\gamma_3}$ as one endpoint and $w_h$ as the other, where $h\neq \ell$ since $\gamma_2\neq \gamma_3$. Furthermore, since $w_\ell\in N[Z]$ and $N[Z]$ is regular, it must be the case that $w_h\notin Z$. Therefore the path $P_2'$ must intersect $N(Z)$ at a vertex $x_r$. Let $J$ be the subpath of $P_2'$ from $x_r$ to $w_h$. Now, since $x_r\in (S\setminus [X])\setminus S'$, we know that there is a $v_\alpha$-$x_r$ path in $H_I-[X]$ which lies entirely in $N[Z]$. Let $Q$ be such a path. Similarly,  we know that there is a $v_\alpha$-$w_\ell$ path $L$ in $H_I-[X]$ which also lies entirely in $N[Z]$ and hence is regular. We now apply Lemma \ref{lem:path_copies} on $L$ to get a path $L'$ with $w_h$ as one endpoint and $v_\beta$ as the other endpoint for some $\beta\in \Sigma$. Since we have already argued that $w_h\neq w_\ell$, we conclude that $\beta\neq \alpha$. However, the concatenated walk $Q+J+L'$ is present in $H_I-[X]$, implying a $v_\alpha$-$v_\beta$ path in $H_I-[X]$, a contradiction to the premise of the lemma. We now address the second subcase under the assumption that this subcase does not occur.

 \item[Case 2.2:] \emph{ $P$ does not contain a vertex in $S\cap N(C)$.} Let $x_r$ be the last vertex of $S$ which is encountered when traversing $P$ from $p_{\gamma_1}$ to $q_{\gamma_2}$ and let $w_\ell$ be the last vertex of $N(C)$ encountered in the same traversal. Since the previous subcase is assumed to not hold, $x_r$ must occur before $w_\ell$ in this traversal. We let $J$ denote the subpath of $P$ between $x_r$ and $w_\ell$. Lemma \ref{lem:partial_symmetry} implies the existence of a  label $h\in \Sigma\setminus \{\ell\}$ such that $w_h\in S$. This follows from the fact that $N(C)\subseteq [S]$. Also, since $w_\ell$ occurs in $P$, $w_h$ is not contained in $S'$ or $[X]$. The same holds for $x_r$ Therefore, the case we are in guarantees the presence of paths $L$ and $Q$ from $v_\alpha$ to $w_h$ and $x_r$ respectively, where $L$ and $Q$ are contained within the set $N[Z]$ and hence they must be regular and amenable to applications of Lemma \ref{lem:path_copies}. We begin by applying Lemma \ref{lem:path_copies} on the regular path $J$ to get a path $J'$ with $w_h$ as one endpoint and $x_{r_1}$ as the other for some $r_1\in \Sigma$. However, since $h\neq \ell$, we conclude that $r_1\neq r$. Therefore, we now apply Lemma \ref{lem:path_copies} on the path $Q$ to obtain a path $Q'$ with $x_{r_1}$ as one endpoint with the other endpoint being $v_\beta$ for some $\beta\in \Sigma$. Again, since $r_1\neq r$, we conclude that $\beta\neq \alpha$. Now, observe that the concatenated walk $L+J'+Q'$ implies the presence of a $v_\alpha$-$v_\beta$ path in $H_I-[X]$, a contradiction to the premise of the lemma. This completes the argument for this subcase as well and consequentially that for Case 2.
\end{description}

	\end{description}
	
	We have thus proved that Case 1 cannot occur at all and in Case 2, there is an exchange argument which constructs an alternate solution $X'$ which is  disjoint from $Z$. This completes the proof of the lemma.	
\end{proof}

The main consequence of the above lemma is that at any point in the run of our algorithm solving an instance $I=(G,k,\phi,\tau)$, if there is a vertex $v$ whose label is `fixed', i.e. $\tau(v)=\{\alpha\}$ for some $\alpha\in \Sigma$ and there is a good $v_\alpha$-$T$ separator $S$ where $T$ is defined as in the premise of the above lemma, then we can correctly `fix' the labelings of all vertices in the set $(R(v_\alpha,S))\inv$. That is, we can define a new function $\tau'$ as follows. For every $u\in V(G)$ and $\gamma\in \Sigma$, we set $\tau'(u)=\{\gamma\}$ if $u_\gamma\in R(v,\alpha)$ and $\tau'(u)=\tau(u)$ otherwise. Lemma \ref{lem:reduction} implies that the given graph has a  deletion set of size at most $k$ which leaves a graph with a feasible labeling consistent with $\tau$ if and only if the graph has deletion set of size at most $k$ which leaves a graph with a feasible labeling consistent with $\tau'$. 

\subsection{Computing good separators}

\begin{definition}
	Let $G$ be a graph and $X$ and $Y$ be disjoint vertex sets and $S$ a \emph{minimum} $X$-$Y$ separator. We say that $S$ is a minimum  $X$-$Y$ separator \textbf{closest} to $X$ if there is no $S'$ which is a minimum $X$-$Y$ separator such that $R(X,S')\subset R(X,S)$. We say that $S$ is a minimum  $X$-$Y$ separator \textbf{closest} to $Y$ if there is no $S'$ which is a minimum $X$-$Y$ separator such that $R(X,S')\supset R(X,S)$. We let $\lambda(X,Y)$ denote the size of a minimum $X$-$Y$ separator.

\end{definition}

\begin{lem}[\cite{Marx06}]
Let $G$ be a graph and $X$ and $Y$ be disjoint vertex sets. There is a unique minimum $X$-$Y$ separator closest to $X$ and a unique minimum $X$-$Y$ separator closest to $Y$.
\end{lem}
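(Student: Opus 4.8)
The plan is to use the standard submodularity-based uncrossing argument. Write $\lambda=\lambda(X,Y)$. Two elementary facts set things up. First, every minimum $X$-$Y$ separator is a \emph{minimal} separator, since a proper subset of a separator of size $\lambda$ would be a separator of size less than $\lambda$; hence, as already noted in the text, a minimum separator $S$ satisfies $N(R(X,S))=S$. Second, the set function $A\mapsto |N(A)|$ on subsets of $V(G)$ is submodular, i.e. $|N(A)|+|N(B)|\ge |N(A\cap B)|+|N(A\cup B)|$; this is standard and I would quote it (for each vertex $v$, a short case analysis on whether $v$ lies in $A$, in $B$, and whether $v$ has a neighbour in $A$ or in $B$ shows $v$ contributes at least as much to the left-hand side as to the right).

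The key step I would prove is the uncrossing claim: if $S_1,S_2$ are minimum $X$-$Y$ separators with $R_i:=R(X,S_i)$, then $N(R_1\cap R_2)$ and $N(R_1\cup R_2)$ are also minimum $X$-$Y$ separators. To see they are $X$-$Y$ separators at all: since $X\subseteq R_1\cap R_2\subseteq R_1\cup R_2\subseteq V(G)\setminus Y$, both sets avoid $X$; any vertex in either set has a neighbour in $R_1$ or in $R_2$, hence lies in $R_1\cup N(R_1)=R_1\cup S_1$ or in $R_2\cup S_2$, both of which avoid $Y$, so both sets avoid $Y$; and in $G-N(R_1\cap R_2)$ a vertex of $X$ can only reach vertices of $R_1\cap R_2$ (their neighbours lie in $R_1\cap R_2$ or in the deleted set $N(R_1\cap R_2)$), a set disjoint from $Y$, so $N(R_1\cap R_2)$ separates $X$ from $Y$; the same reasoning applies to $N(R_1\cup R_2)$. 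Now submodularity yields $|N(R_1\cap R_2)|+|N(R_1\cup R_2)|\le |S_1|+|S_2|=2\lambda$, while each summand on the left is at least $\lambda$ as the size of an $X$-$Y$ separator; hence each equals $\lambda$.

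Existence is then immediate: the family of minimum $X$-$Y$ separators is finite and non-empty, so among the reachable sets $\{R(X,S):S\text{ a minimum }X\text{-}Y\text{ separator}\}$ there is an inclusion-minimal one — giving a minimum separator closest to $X$ — and an inclusion-maximal one — giving a minimum separator closest to $Y$. For uniqueness of the one closest to $X$, suppose $S_1,S_2$ are both minimum separators closest to $X$. By the uncrossing claim $S^\ast:=N(R_1\cap R_2)$ is a minimum separator, and in $G-S^\ast$ every $X$-vertex reaches only vertices of $R_1\cap R_2$, so $R(X,S^\ast)\subseteq R_1\cap R_2\subseteq R_1$. Since $S_1$ is closest to $X$, this containment cannot be strict, so $R(X,S^\ast)=R_1$, whence $R_1\subseteq R_1\cap R_2$, i.e. $R_1\subseteq R_2$; swapping the roles of $1$ and $2$ gives $R_2\subseteq R_1$, so $R_1=R_2$ and therefore $S_1=N(R_1)=N(R_2)=S_2$. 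The case of the separator closest to $Y$ is symmetric: take $S^\ast=N(R_1\cup R_2)$, a minimum separator, and observe that each $R_i\subseteq R(X,S^\ast)$, because a path in $G-S_i$ witnessing $u\in R_i$ lies entirely inside $R_i\subseteq R_1\cup R_2$, which is disjoint from $S^\ast$; hence $R_1\cup R_2\subseteq R(X,S^\ast)$, and maximality of $R_1$ forces $R(X,S^\ast)=R_1$, so $R_2\subseteq R_1$, and symmetrically $R_1\subseteq R_2$.

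The only real obstacle is bookkeeping: checking carefully that the uncrossed sets $N(R_1\cap R_2)$ and $N(R_1\cup R_2)$ are genuine $X$-$Y$ separators — that they avoid $X\cup Y$ and that deleting them actually disconnects $X$ from $Y$ — and correctly pinning down the containments between $R_1\cap R_2$, $R_1\cup R_2$, and the reachable sets of these new separators. Once submodularity of $|N(\cdot)|$ is in hand, the rest is the routine uncrossing bookkeeping sketched above.
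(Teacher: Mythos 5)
The paper cites this lemma from Marx~\cite{Marx06} without reproducing its proof, so there is no in-paper argument to compare against; your submodularity-plus-uncrossing proof is the standard one and is correct. The key facts you use (minimum separators are minimal so $N(R(X,S))=S$; submodularity of $A\mapsto|N(A)|$; $N(R_1\cap R_2)$ and $N(R_1\cup R_2)$ are $X$-$Y$ separators of size forced to equal $\lambda$; then inclusion-minimality/maximality of the reachable side yields existence, and the uncrossing collapse yields uniqueness) all check out, including the containments $R(X,N(R_1\cap R_2))\subseteq R_1\cap R_2$ and $R_1\cup R_2\subseteq R(X,N(R_1\cup R_2))$ that drive the two uniqueness arguments.
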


We need the following lemma from \cite{MarxOR13}.

\begin{lem}[\cite{MarxOR13}]\label{lem:separator layer}
 Let $X,Y$ be two disjoint vertex sets in a graph $G$ such that the minimum size of an $X$-$Y$ separator is $\ell>0$. Then, there is a collection ${\cal J}=\{J_1,\dots,J_q\}$ of vertex sets where $X\subseteq J_i\subseteq V(G)\setminus Y$ such that
 
 \begin{enumerate}
\setlength{\itemsep}{-2pt}
  \item $J_1\subset J_2\subset \cdots \subset J_q$,
  \item $J_i$ is reachable from $X$ in $G[J_i]$,
  \item $\vert N(J_i)\vert=\ell$ for every $1\leq i\leq q$ and 
  \item every $X$-$Y$ separator of size $\ell$ is fully contained in $\bigcup_{i=1}^q N(J_i)$.
 \end{enumerate}
 
 \noindent
 Furthermore, there is an algorithm that, given $G,X,Y$ and an integer $\ell'$, runs in time $\Oh(\ell'(\vert V(G)\vert +\vert E(G)\vert))$ and either correctly concludes that there is no $X$-$Y$ separator of size at most $\ell'$ or produces the sets $J_1,J_2\setminus J_1,\dots, J_q\setminus J_{q-1}$ corresponding to the aforementioned  collection $\cal J$.

\end{lem}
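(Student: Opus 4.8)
The plan is to read this as a purely graph-theoretic statement about the lattice of minimum vertex separators and to derive it from classical max-flow machinery. First I would pass to an edge-cut formulation: split every $w\notin X\cup Y$ into $w^-,w^+$ joined by a unit-capacity arc, replace an original edge $uw$ by the $\infty$-capacity arcs $u^+w^-$ and $w^+u^-$, and add a super-source $s$ with $\infty$-arcs into $X$ and a super-sink $t$ with $\infty$-arcs out of $Y$. Minimum $X$--$Y$ vertex separators of $G$ then correspond exactly to minimum $s$--$t$ edge cuts of this network, the cut arcs of such a cut are precisely the unit arcs $\{w^-w^+:w\in S\}$, and $\lambda(X,Y)=\ell$ is the value of a maximum $s$--$t$ flow (Menger). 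Using submodularity of the neighbourhood function (equivalently, the Picard--Queyranne theorem applied to the network), the family $\mathcal{L}=\{R(X,S): S\text{ a minimum }X\text{--}Y\text{ separator}\}$ is a lattice under $\cup$ and $\cap$, with least element $R(X,S_X)$ and greatest element $R(X,S_Y)$, where $S_X,S_Y$ are the minimum separators closest to $X$ and to $Y$. The sets $J_1\subset\cdots\subset J_q$ will simply be the members of a \emph{maximal} chain of $\mathcal{L}$ from $R(X,S_X)$ to $R(X,S_Y)$. For any $J=R(X,S)\in\mathcal{L}$ the separator $S$ has size $\ell$ (every minimum-size separator is inclusion-minimal), hence $N(J)=N(R(X,S))=S$ has size $\ell$ (item 3), $J$ is the union of the components of $G-S$ meeting $X$ so every vertex of $J$ is reachable from $X$ inside $G[J]$ (item 2), and $X\subseteq J\subseteq V(G)\setminus Y$; item 1 is immediate for a chain. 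So the entire content is item 4 together with the running time.

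For item 4, fix a maximum flow and a corresponding family $P_1,\dots,P_\ell$ of vertex-disjoint $X$--$Y$ paths. Every minimum separator meets each $P_j$ in exactly one vertex, so writing $Z_j$ for the set of vertices of $P_j$ used by at least one minimum separator, a minimum separator $S$ is encoded by the tuple $\sigma(S)=(\sigma_j(S))_{j\in[\ell]}$ with $\sigma_j(S)\in Z_j$. Ordering $Z_j$ along $P_j$ from $X$ to $Y$, one checks that $R(X,S)\cap V(P_j)$ is the initial segment of $P_j$ ending just before $\sigma_j(S)$, whence $\sigma_j(S\wedge S')=\min(\sigma_j(S),\sigma_j(S'))$ and $\sigma_j(S\vee S')=\max(\sigma_j(S),\sigma_j(S'))$; thus $\sigma$ is a lattice embedding of $\mathcal{L}$ into the product of chains $\prod_j Z_j$, and $\bigcup_S S=\bigcup_j Z_j$. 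Now suppose some $v\in Z_j$ lies on no member $N(J_i)$ of our chain. Since the chain runs from the coordinatewise-least to the coordinatewise-greatest element, there are consecutive $J_i\subset J_{i+1}$ with $\sigma_j(J_i)<v<\sigma_j(J_{i+1})$ along $P_j$. Choosing any minimum separator $S$ with $\sigma_j(S)=v$ and setting $M:=(J_i\vee R(X,S))\wedge J_{i+1}$, we get $M\in\mathcal{L}$, $J_i\subseteq M\subseteq J_{i+1}$, and $\sigma_j(M)=\min(\max(\sigma_j(J_i),v),\sigma_j(J_{i+1}))=v$; hence $J_i\subsetneq M\subsetneq J_{i+1}$, contradicting maximality of the chain. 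Therefore $\bigcup_i N(J_i)\supseteq\bigcup_j Z_j=\bigcup_S S$, which is item 4.

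For the algorithm I would compute a maximum flow by running at most $\ell'+1$ rounds of BFS-based augmentation; if the $(\ell'+1)$-st round still finds an augmenting path, report that there is no $X$--$Y$ separator of size at most $\ell'$, and otherwise we have a maximum flow of value $\ell\le\ell'$ in time $\Oh(\ell'(|V(G)|+|E(G)|))$. Given this flow I would build the residual digraph, contract its strongly connected components, topologically sort the resulting DAG, and read off the chain of closed sets obtained by repeatedly adjoining one SCC all of whose out-neighbours are already present; projecting these closed sets back to $G$ (discarding consecutive duplicates caused by $w^-$ and $w^+$ being absorbed at different steps) yields a maximal chain $J_1\subset\cdots\subset J_q$ in $\mathcal{L}$. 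This post-processing is $\Oh(|V(G)|+|E(G)|)$, and since $q$ may be linear in $|V(G)|$ one emits the chain compressed as $J_1,J_2\setminus J_1,\dots,J_q\setminus J_{q-1}$, touching each vertex only $\Oh(1)$ times; the total is $\Oh(\ell'(|V(G)|+|E(G)|))$.

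The parts I expect to need the most care are: (i) verifying that the minimum-cut lattice survives the vertex-splitting cleanly — that the projection $C\mapsto R(X,N_G(\,\cdot\,))$ really identifies the Picard--Queyranne lattice of the network with $\mathcal{L}$ and carries a maximal chain of closed sets to a maximal chain of $\mathcal{L}$, so that the sandwich argument for item 4 applies to the chain the algorithm actually outputs; and (ii) the bookkeeping that keeps the whole procedure, including the output, linear even though $q$ can be $\Theta(|V(G)|)$. The combinatorial heart, the maximality/sandwich argument for item 4, is short once the lattice picture is in place.
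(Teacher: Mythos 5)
The paper does not prove Lemma~\ref{lem:separator layer} at all: it is invoked as a black box with the citation~\cite{MarxOR13}, and every later use in the paper goes through that citation. So there is no ``paper's proof'' against which to compare; your proposal is a reconstruction of the argument that the cited source would supply.

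That said, your reconstruction is sound and follows what is, to my knowledge, essentially the same route as the source: pass to an edge-capacitated network by vertex splitting, compute a maximum flow by $\ell'+1$ rounds of BFS augmentation, use the Picard--Queyranne lattice of closed sets in the SCC-DAG of the residual graph, output a maximal chain, and argue item 4 by the coordinate encoding of minimum separators along $\ell$ disjoint $X$--$Y$ paths. The lattice argument for items 1--3 and the sandwich argument $M=(J_i\vee R(X,S))\wedge J_{i+1}$ for item 4 are correct. One of the two caveats you flag at the end is in fact a non-issue: the projection from closed sets of the split network back to $\mathcal{L}$ is already injective, because $R(X,S)$ determines $S=N(R(X,S))$ for a minimal separator and $S$ together with $R(X,S)$ determines the closed set, so there are no ``consecutive duplicates'' to discard. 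The genuine care needed is in the other point you flag, namely verifying that the SCC-by-SCC build-up stays inside the ideals excluding $t$'s SCC and that each intermediate closed set still has cut value $\ell$; that follows from the Picard--Queyranne correspondence, but it does need to be said explicitly. With that filled in, the proposal is a correct proof of the lemma.
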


We will state some simple consequences of the above lemma in a form that will be easier to invoke during our arguments.

\begin{lem}\label{lem:easy_invocation}
Let $X,Y$ be two disjoint vertex sets in a graph $G$ such that the minimum size of an $X$-$Y$ separator is $\ell>0$. Let ${\cal J}=\{J_1,\dots,J_q\}$ be the collection defined in  the statement of Lemma \ref{lem:separator layer}.  Then, 
\begin{enumerate}

  \item $N(J_1)$ is the minimum $X$-$Y$ separator closest to $X$ and $N(J_q)$ is the minimum $X$-$Y$ separator closest to $Y$. 
	\item  $\forall v\in J_1$, the size of a smallest minimal $X$-$Y$ separator which contains $v$ is at least $\ell+1$.
	\item  for any $1\leq i\leq q-1$, for any vertex $v\in J_{i+1}\setminus N[J_i]$, the size of any minimal $X$-$Y$ separator which contains $v$ is at least $\ell+1$.   
\end{enumerate}
	
\end{lem}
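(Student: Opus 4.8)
The plan is to derive all three statements directly from the four properties of the collection ${\cal J}$ in Lemma~\ref{lem:separator layer}, using in particular the nestedness $J_1\subset\cdots\subset J_q$ and the containment property that \emph{every} $X$-$Y$ separator of size $\ell$ is contained in $\bigcup_{i=1}^q N(J_i)$. Throughout I will use that $N(J_1)$ and $N(J_q)$ are minimum $X$-$Y$ separators with $R(X,N(J_1))=J_1$ and $R(X,N(J_q))=J_q$: since $X\subseteq J_i$, every vertex of $J_i$ is reachable from $X$ inside $G[J_i]$, and $N(J_i)$ consists exactly of the vertices outside $J_i$ with a neighbour in $J_i$, deleting $N(J_i)$ leaves the components of $G[J_i]$ (each of which meets $X$) with no edge to the rest of the graph, so $R(X,N(J_i))=J_i$; as $|N(J_i)|=\ell$ these are minimum separators. (That $N(J_i)$ is genuinely disjoint from $Y$ is part of what the construction of \cite{MarxOR13} provides.)

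For the first statement I would first show that $J_1\subseteq R(X,S')$ for \emph{every} minimum $X$-$Y$ separator $S'$. By property~4, $S'\subseteq\bigcup_i N(J_i)$, and each $N(J_i)$ is disjoint from $J_1$ because $J_1\subseteq J_i$ and $N(J_i)\cap J_i=\emptyset$; hence $S'\cap J_1=\emptyset$, so $G[J_1]$ is a subgraph of $G-S'$ in which every vertex is reachable from $X$, giving $J_1\subseteq R(X,S')$. Since $R(X,N(J_1))=J_1$, no minimum separator is strictly closer to $X$ than $N(J_1)$, so $N(J_1)$ is the minimum $X$-$Y$ separator closest to $X$ (uniqueness being the statement quoted from \cite{Marx06}). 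For $N(J_q)$: any $u\in N(J_i)$ has a neighbour in $J_i\subseteq J_q$, so $u\in J_q\cup N(J_q)$; thus $\bigcup_i N(J_i)\subseteq J_q\cup N(J_q)$, hence any minimum separator $S'$ lies in $J_q\cup N(J_q)$. If in addition $R(X,S')\supseteq R(X,N(J_q))=J_q$, then, since $S'$ is disjoint from $R(X,S')$, we get $S'\cap J_q=\emptyset$, which forces $S'\subseteq N(J_q)$ and hence $S'=N(J_q)$ because both have size $\ell$. So no minimum separator is strictly closer to $Y$ than $N(J_q)$, giving the second half of statement~1.

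Both remaining statements reduce to a single observation: if a vertex $v$ lies in no $X$-$Y$ separator of size $\ell$, then every minimal $X$-$Y$ separator containing $v$ has size at least $\ell+1$ (such a separator has size $\geq\ell$, and one of size exactly $\ell$ would, by property~4, be contained in $\bigcup_i N(J_i)$ and hence could not contain $v$). For statement~2, let $v\in J_1$; as already noted every $N(J_i)$ is disjoint from $J_1$, so $v\notin\bigcup_i N(J_i)$ and $v$ is in no size-$\ell$ separator (the case $v\in X$ being vacuous, as no $X$-$Y$ separator meets $X$). For statement~3, let $v\in J_{i+1}\setminus N[J_i]$ and suppose towards a contradiction that $v\in N(J_j)$ for some $j$; if $j\leq i$ then $v$ has a neighbour in $J_j\subseteq J_i$, so $v\in N[J_i]$, a contradiction, and if $j\geq i+1$ then $v\notin J_j\supseteq J_{i+1}$, contradicting $v\in J_{i+1}$. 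Hence $v\notin\bigcup_j N(J_j)$, so $v$ lies in no size-$\ell$ separator.

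I do not expect a genuine obstacle here: the argument is essentially bookkeeping with the definitions of $N(\cdot)$, $N[\cdot]$, $R(X,\cdot)$ and the four listed properties of ${\cal J}$. The two points deserving slight care are the inclusion $\bigcup_i N(J_i)\subseteq J_q\cup N(J_q)$, which is what lets me conclude $S'=N(J_q)$ rather than merely relate $S'$ and $N(J_q)$ through the ``covers'' relation, and reading the identity $R(X,N(J_i))=J_i$ correctly off the definition of $R(X,\cdot)$ as the union of the components of $G-N(J_i)$ that meet $X$.
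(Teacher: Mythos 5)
Your proof is correct and follows essentially the same approach as the paper's: derive everything from the nestedness $J_1\subset\cdots\subset J_q$ together with property~4 of Lemma~\ref{lem:separator layer} (every size-$\ell$ separator lies in $\bigcup_i N(J_i)$), arguing that the relevant vertex sets ($J_1$, and $J_{i+1}\setminus N[J_i]$) are disjoint from $\bigcup_i N(J_i)$. The one place you go beyond the paper is in spelling out the ``closest to $Y$'' half of statement~1 via the inclusion $\bigcup_i N(J_i)\subseteq J_q\cup N(J_q)$, where the paper merely says ``Similarly''; your version is a genuine and correct filling-in of that gap, not a different route.
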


\begin{proof} For the first statement, if $S=N(J_1)$ is not the minimum $X$-$Y$ separator closest to $X$, then there is a minimum $X$-$Y$ separator $S'$ such that $R(X,S')\subset R(X,S)$. Since $R(X,S)=J_1$, it must be the case that $S'\cap J_1\neq \emptyset$. However, by Lemma \ref{lem:separator layer}, we know that every minimum $X$-$Y$ separator is contained in $\bigcup_{i=1}^q N(J_i)$ and since $J_1\subset J_2\subset \dots J_q$, every minimum $X$-$Y$ separator is disjoint from $J_1$, a contradiction. Similarly, we can argue that $N(J_q)$ is the unique minimum $X$-$Y$ separator closest to $Y$. 

The argument for the second statement is identical. For the last statement, we argue that $\bigcup_{j=1}^q N(J_j)$ is disjoint from $J_{i+1}\setminus N[J_i]$ for any $1\leq i\leq q-1$. Suppose that this is not the case and there is an index $1\leq i\leq q-1$ and a vertex $u$ such that $u$ is present in both sets $\bigcup_{j=1}^q N(J_j)$ and $J_{i+1}\setminus N[J_i]$. Since $u\notin N[J_i]$, $u\notin N[J_r]$ for any $r\leq i$. Further, since $J_{r}\supseteq J_{i+1}$ for every $r>i$, $u$ is present in $J_r$ for every $r>i$ and hence not in $N(J_r)$ for any $r>i$. This  contradicts our assumption and completes the proof of the lemma.	
\end{proof}

\begin{lem}\label{lem:crux lemma}
Let $I=(G,k,\phi,\tau)$ be an instance of {\labelcoverv}, $v$ be a vertex in $G$ and let $\alpha\in \Sigma$. Let $T^\alpha_v$ denote the set $[v]\setminus \{v_\alpha\}$ and $T\supseteq T^\alpha_v$ be a set not containing $v_\alpha$. There is an algorithm that, given $I$, $v$, $\alpha$, and $T$ runs in time $\Oh(|\Sigma| \cdot k(m+n))$ and either 
	\begin{itemize} 
	\item correctly concludes that there is no $v_\alpha$-$T$ separator of size at most $|\Sigma|\cdot k$ or 

   \item returns a pair of minimum $v_\alpha$-$T$ separators $S_1$ and $S_2$ such that $S_2$ covers $S_1$, $S_1$ is good, $S_2$ is bad and for any vertex $u\in R(v_\alpha,S_2)\setminus R[v_\alpha,S_1]$, the size of a minimal $v_\alpha$-$T$ separator containing $u$ is at least $|S_1|+1$  or
   \item returns a good minimum $v_\alpha$-$T$ separator $S$ such that no other minimum  $v_\alpha$-$T$ separator covers $S$ or
   \item correctly concludes that there is no good $v_\alpha$-$T$ minimum separator.
		
	\end{itemize}
 \end{lem}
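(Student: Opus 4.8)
The plan is to reduce everything to one application of the algorithm of Lemma~\ref{lem:separator layer} to the auxiliary graph $H_I$, with source set $\{v_\alpha\}$, sink set $T$, and budget $\ell'=|\Sigma|\cdot k$, and then to read the answer off the chain it produces. Before invoking it I would dispose of the case $\lambda(v_\alpha,T)=0$, which a single traversal from $v_\alpha$ in $H_I$ detects (it reaches no vertex of $T$); here the empty set is the \emph{unique} minimum $v_\alpha$-$T$ separator, so the same traversal decides whether $R[v_\alpha,\emptyset]$ — the component of $v_\alpha$ — is regular. If it is, output $S=\emptyset$ for the third bullet (there is no other minimum separator, so that requirement holds vacuously); otherwise output the fourth bullet. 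Otherwise $\ell:=\lambda(v_\alpha,T)\geq 1$ and we may run the algorithm of Lemma~\ref{lem:separator layer}; if it reports that no $v_\alpha$-$T$ separator of size at most $|\Sigma|\cdot k$ exists, output the first bullet. In the remaining case we obtain sets $J_1\subset J_2\subset\dots\subset J_q$ with $|N(J_i)|=\ell$ for all $i$; each $N(J_i)$ is a minimum $v_\alpha$-$T$ separator with $R(v_\alpha,N(J_i))=J_i$, hence (being minimum, and therefore inclusion-minimal) $R[v_\alpha,N(J_i)]=N[J_i]$. The separators $N(J_1),\dots,N(J_q)$ form a chain under covering, since $J_i\subsetneq J_{i+1}$ means exactly that $N(J_{i+1})$ covers $N(J_i)$; by Lemma~\ref{lem:easy_invocation}, $N(J_1)$ is the minimum $v_\alpha$-$T$ separator closest to $v_\alpha$ and $N(J_q)$ the one closest to $T$.

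The structural point driving the case analysis is that goodness is monotone along this chain: if a minimum (hence minimal) separator $S$ covers a minimum separator $S'$, then $R[v_\alpha,S']\subseteq R[v_\alpha,S]$, so, since a subset of a regular set is regular, $S$ good implies $S'$ good, equivalently $S'$ bad implies $S$ bad. Consequently the indices $i$ for which $N(J_i)$ is good form a prefix $1,\dots,t$ of $1,\dots,q$ (with $t=0$ if none and $t=q$ if all). The algorithm determines $t$ by testing, for each $i$, whether $N[J_i]$ is regular; crucially this can be done in total linear time because $N[J_i]$ — unlike $N(J_i)$ itself — is monotone nondecreasing in $i$ (as $J_i$ is), so one processes the differences $J_1,\,J_2\setminus J_1,\,\dots,\,J_q\setminus J_{q-1}$ produced by Lemma~\ref{lem:separator layer} incrementally, maintaining for each vertex of $G$ the number of its copies currently in $N[J_i]$ together with the number of vertices of $G$ whose count has exceeded one; a level is good precisely when the latter quantity is zero.

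It then remains to branch on $t$. If $t=q$, output the good separator $S=N(J_q)$ for the third bullet: since $N(J_q)$ is the minimum separator closest to $T$, no minimum $v_\alpha$-$T$ separator $S'$ satisfies $R(v_\alpha,S')\supsetneq J_q$, while $R(v_\alpha,S')=J_q$ forces $S'=N(J_q)$; hence no other minimum separator covers $N(J_q)$. If $t=0$, output the fourth bullet: $N(J_1)$ is bad, and since it is the minimum separator closest to $v_\alpha$ it is covered by \emph{every} minimum $v_\alpha$-$T$ separator (a standard consequence of the lattice structure of minimum separators together with the uniqueness statement cited from \cite{Marx06}), so by the monotonicity above every minimum $v_\alpha$-$T$ separator is bad. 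If $0<t<q$, output $S_1=N(J_t)$ and $S_2=N(J_{t+1})$ for the second bullet: $S_1$ is good, $S_2$ is bad (as $t$ is the last good index and $t+1\leq q$), $S_2$ covers $S_1$ because $J_t\subsetneq J_{t+1}$, and for every $u\in R(v_\alpha,S_2)\setminus R[v_\alpha,S_1]=J_{t+1}\setminus N[J_t]$ the third item of Lemma~\ref{lem:easy_invocation} (with $i=t\leq q-1$) gives that every minimal $v_\alpha$-$T$ separator containing $u$ has size at least $\ell+1=|S_1|+1$. Building $H_I$, extracting the at most three neighbourhoods $N(J_t),N(J_{t+1}),N(J_q)$ needed for the output, and the incremental regularity test are all linear in $|V(H_I)|+|E(H_I)|=\Oh(|\Sigma|(m+n))$, so the running time is dominated by the single call to the algorithm of Lemma~\ref{lem:separator layer} on $H_I$, which is within the claimed bound.

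I expect the main obstacle to be the linear-time computation of $t$: one has to exploit that the \emph{closed} neighbourhoods $N[J_i]$ grow with $i$ while the separators $N(J_i)$ do not, so that regularity is maintained incrementally rather than recomputed at each level, and one has to be careful throughout to work with $R[v_\alpha,N(J_i)]=N[J_i]$ rather than with $N(J_i)$ when applying the definition of a good separator. On the combinatorial side, the only ingredient that is not immediate from the cited lemmas is the fact, needed for the fourth bullet, that the minimum $v_\alpha$-$T$ separator closest to $v_\alpha$ is covered by every minimum $v_\alpha$-$T$ separator; this follows from posimodularity of the neighbourhood function in $H_I$ together with the definition of ``closest to $v_\alpha$''.
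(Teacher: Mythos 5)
Your proof is correct and follows essentially the same route as the paper's: invoke the algorithm of Lemma~\ref{lem:separator layer} on $H_I$ with source $\{v_\alpha\}$ and sink $T$, scan the chain $J_1\subset\dots\subset J_q$ for the first index at which $N[J_i]$ becomes irregular, and branch on whether that index is $1$, strictly between $1$ and $q$, or does not exist — which is exactly the paper's case split, with your $t$ equal to the paper's $i-1$. The only (harmless) differences are that you make explicit the monotonicity of goodness along the chain and the incremental bookkeeping for the regularity test, and that you separately dispatch the degenerate case $\lambda(v_\alpha,T)=0$ (which the paper sidesteps because its main algorithm only invokes the lemma after verifying that a $v_\alpha$-$T$ path exists).
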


\begin{proof}
	We begin by executing the algorithm of Lemma \ref{lem:separator layer} with $G=H_I$, $X=\{v_\alpha\}$, $Y=T$ and $\ell'=|\Sigma|\cdot k$.
	  If this algorithm concluded that there is no $v_\alpha$-$T$ separator of size at most $|\Sigma|\cdot k$, then we return the same. Otherwise, it must be the case that this subroutine returned a family of sets ${\cal J}=\{J_1,\dots,J_q\}$ with $|N(J_i)|\leq |\Sigma|\cdot k$ for each $i\in [q]$. 
	
	We examine the sets in $\cal J$ in time $\Oh(|\Sigma| \cdot (m+n))$ and compute the least $i$ such that $N[J_i]$ is irregular. Suppose $i=1$. Since every minimum $v_\alpha$-$T$ separator covers the minimum $v_\alpha$-$T$ separator closest to $v_\alpha$ (which is precisely $N(J_1)$ by Lemma \ref{lem:easy_invocation}), there is no good minimum $v_\alpha$-$T$ separator at all.
	
	On the other hand, if $i>1$, then we set
	 $S_1=N(J_{i-1})$ and $S_2=N(J_i)$. It follows from Lemma \ref{lem:easy_invocation} that these sets satisfy the required properties.
	
	Finally, if $N[J_i]$ is regular for every $i\in [q]$, then we set $S=N(J_q)$. It already follows from Lemma \ref{lem:easy_invocation} that no other minimum $v_\alpha$-$T$ separator covers $N(J_q)$.
	This completes the proof of the lemma.
	\end{proof}

\begin{lem}\label{lem:twopaths}
Let $I=(G,k,\phi,\tau)$ be an instance of {\labelcoverv}, $v$ be a vertex in $G$, $\alpha\in \Sigma$, $T\supseteq [v]\setminus \{v_\alpha\}$ be a set not containing $v_\alpha$ and let $\ell>0$ be the size of a minimum $v_\alpha$-$T$ separator in $H_I$. Let $S_1$ and $S_2$ be a pair of minimum $v_\alpha$-$T$ separators such that 
  $S_1$ is good,
  $S_2$ is bad, and
 and for any vertex $y\in R(v_\alpha,S_2)\setminus R[v_\alpha,S_1]$, the size of a minimal $v_\alpha$-$T$ separator containing $y$ is at least $\ell+1$. Let $u\in V(G)$ and $\gamma_1,\gamma_2\in \Sigma$ such that $u_{\gamma_1},u_{\gamma_2}\in R[v_\alpha,S_2]$.
Then, 
\begin{enumerate}

 \item $R[v_\alpha,S_2]$ contains a pair of paths $P_1$ and $P_2$ such that for each $i\in \{1,2\}$, the path $P_i$ is a $v_\alpha$-$u_{\gamma_i}$ path and both paths are internally vertex disjoint from $S_2$ and contain at most one vertex of $S_1$. 
 \item Given $I$, $v_\alpha$, $S_1$ and $S_2$, there is an   algorithm that, in time $\Oh(|\Sigma| \cdot k(m+n))$,  computes a pair of paths with the above properties.
 \item For $i\in \{1,2\}$, any minimum $v_\alpha$-$T\cup \{u_{\gamma_i}\}$
separator disjoint from $V(P_i)\cap (S_1\cup S_2)$ and $R(v_\alpha,S_1)$ has size at least $\ell+1$, where $\ell$ is the size of a minimum $v_\alpha$-$T$ separator.
 
 \end{enumerate}

 \end{lem}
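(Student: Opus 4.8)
\medskip
\noindent\textit{Proof plan.} The plan is to establish the three items of the lemma in order. Throughout I use that $S_2$ covers $S_1$ (so $R(v_\alpha,S_1)\subseteq R(v_\alpha,S_2)$), as holds for the pairs produced by Lemma~\ref{lem:crux lemma}; this gives $S_1\subseteq R[v_\alpha,S_2]$, $R[v_\alpha,S_1]\subseteq R[v_\alpha,S_2]$, and that $R(v_\alpha,S_1)$ is disjoint from $S_2$. I also use that a minimum separator is minimal, so $N(R(v_\alpha,S_j))=S_j$ for $j\in\{1,2\}$ and each of $R(v_\alpha,S_1)$, $R[v_\alpha,S_1]$, $R(v_\alpha,S_2)$, $R[v_\alpha,S_2]$ induces a connected subgraph of $H_I$.

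For item~1, fix $i\in\{1,2\}$. If $u_{\gamma_i}\in R[v_\alpha,S_1]$, then since $R(v_\alpha,S_1)$ is connected and every vertex of $S_1$ has a neighbour in it, there is a $v_\alpha$--$u_{\gamma_i}$ path inside $R[v_\alpha,S_1]$ whose only possible vertex of $S_1$ is $u_{\gamma_i}$ itself; its internal vertices lie in $R(v_\alpha,S_1)\subseteq R(v_\alpha,S_2)$, so it is internally disjoint from $S_2$ and lies in $R[v_\alpha,S_2]$, and we take it as $P_i$. Otherwise, after possibly replacing $u_{\gamma_i}$ by one of its neighbours in $R(v_\alpha,S_2)$ (when $u_{\gamma_i}\in S_2$; the edge is re-appended at the end) and re-applying the previous case if that neighbour lands in $R[v_\alpha,S_1]$, we may assume the target is a vertex $t\in R(v_\alpha,S_2)\setminus R[v_\alpha,S_1]$. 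Let $D$ be the connected component of $H_I[R(v_\alpha,S_2)\setminus R[v_\alpha,S_1]]$ containing $t$. The crux is that $N(D)\cap S_1\neq\emptyset$: since $D\subseteq R(v_\alpha,S_2)$ and $D$ lies in a component of $H_I-S_1$ other than $R(v_\alpha,S_1)$, one checks $N(D)\subseteq(S_1\setminus S_2)\cup S_2$, and $N(D)\subseteq S_2$ would make $D$ unreachable from $v_\alpha$ in $H_I-S_2$, contradicting $D\subseteq R(v_\alpha,S_2)$. Choosing $s\in N(D)\cap S_1$ and concatenating a $v_\alpha$--$s$ path inside $R(v_\alpha,S_1)\cup\{s\}$ with an $s$--$t$ path inside $\{s\}\cup D$ gives a path inside $R[v_\alpha,S_2]$ that is internally disjoint from $S_2$ and meets $S_1$ only in $s$; re-appending the final edge if $u_{\gamma_i}$ was replaced yields $P_i$. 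Doing this for $i=1,2$ produces $P_1,P_2$.

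For item~3, fix $i$ and let $S^\star$ be a minimum $v_\alpha$--$(T\cup\{u_{\gamma_i}\})$ separator disjoint from $V(P_i)\cap(S_1\cup S_2)$ and from $R(v_\alpha,S_1)$. Enlarging the sink side cannot decrease separator size, so $|S^\star|\ge\lambda(v_\alpha,T)=\ell$, and it suffices to exclude $|S^\star|=\ell$. In that case $S^\star$ is also a minimum --- hence minimal --- $v_\alpha$--$T$ separator that separates $v_\alpha$ from $u_{\gamma_i}$, so it meets $P_i$ in an internal vertex $z$ (the endpoints $v_\alpha,u_{\gamma_i}$ lie outside $S^\star$). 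By the choice of $S^\star$, $z\notin S_1\cup S_2$ and $z\notin R(v_\alpha,S_1)$; and since $z$ is internal to $P_i$, item~1 gives $z\in R(v_\alpha,S_2)$. Hence $z\in R(v_\alpha,S_2)\setminus R[v_\alpha,S_1]$, so the third hypothesis on $S_1,S_2$ says every minimal $v_\alpha$--$T$ separator containing $z$ has size at least $\ell+1$ --- contradicting that $S^\star$ is such a separator of size $\ell$. Therefore $|S^\star|\ge\ell+1$.

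For item~2, every operation above is realised by a bounded number of graph searches in $H_I$ and in the subgraphs $H_I-S_1$, $H_I-S_2$, $H_I[R(v_\alpha,S_1)]$ and $H_I[R(v_\alpha,S_2)\setminus R[v_\alpha,S_1]]$, each of size $\Oh(|\Sigma|(m+n))$; as $S_1,S_2$ are given, the total running time is $\Oh(|\Sigma|(m+n))$, well within the claimed bound. The step I expect to be the main obstacle is the routing in item~1, and within it the claim that \emph{every} component of $H_I[R(v_\alpha,S_2)\setminus R[v_\alpha,S_1]]$ has a neighbour in $S_1$, which is what allows reaching $t$ while crossing $S_1$ only once; everything else is either a short separator-theoretic observation or straightforward bookkeeping. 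Item~3 is short, but its delicate point is confirming that the crossing vertex $z$ genuinely lies in the ``gap'' $R(v_\alpha,S_2)\setminus R[v_\alpha,S_1]$, which is exactly the region to which the third hypothesis applies.
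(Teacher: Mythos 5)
Your proof is correct. Items~2 and~3 are essentially the paper's argument: for item~3 the paper likewise observes that a $v_\alpha$--$(T\cup\{u_{\gamma_i}\})$ separator avoiding $V(P_i)\cap(S_1\cup S_2)$ must pick up an internal vertex of $P_i$, which by item~1 lies in $R(v_\alpha,S_2)$ and, once $R(v_\alpha,S_1)$ is excluded, in the gap $R(v_\alpha,S_2)\setminus R[v_\alpha,S_1]$, whereupon the hypothesis on $S_1,S_2$ applies. Your added step --- noting that a size-$\ell$ separator is automatically a \emph{minimal} $v_\alpha$--$T$ separator, which is what the hypothesis requires --- is a welcome precision that the paper leaves implicit.

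Item~1 is where you diverge. The paper gives a short extremal argument: among all $v_\alpha$--$u_{\gamma_i}$ paths inside $R[v_\alpha,S_2]$ that are internally disjoint from $S_2$, take one $Q_i$ minimising $|V(Q_i)\cap S_1|$; if it met $S_1$ at two vertices $x$ (nearer $v_\alpha$) and $y$, replacing the $v_\alpha$--$y$ prefix by a $v_\alpha$--$y$ path through $R(v_\alpha,S_1)\cup\{y\}$ (which exists because $S_1$ is a minimal separator) yields a walk with strictly fewer $S_1$-vertices, a contradiction. You instead give a direct construction: route from $v_\alpha$ through $R(v_\alpha,S_1)$ to a vertex $s\in N(D)\cap(S_1\setminus S_2)$ where $D$ is the gap component containing the target, using the key (and correct) observation that $N(D)\subseteq S_1\cup S_2$ together with $D\subseteq R(v_\alpha,S_2)$ forces $N(D)\not\subseteq S_2$. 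Both arguments are sound. The paper's is shorter; yours has the advantage that the proof of item~1 \emph{is} the algorithm for item~2, whereas the paper proves existence non-constructively and then separately describes the two-BFS construction. One presentational nit: your case split first tests $u_{\gamma_i}\in R[v_\alpha,S_1]$ and only afterwards handles $u_{\gamma_i}\in S_2$, which is the right order (if $u_{\gamma_i}\in S_1\cap S_2$ the first case already applies and no replacement is needed); it would help the reader if you said so explicitly, since otherwise the ``replace and re-apply'' step looks like it might pick up a second $S_1$-vertex.
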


\begin{proof} We begin with the first statement. Since for each $i\in \{1,2\}$, $u_{\gamma_i}$ is in $R[v_\alpha,S_2]$, there is clearly a $v_\alpha$-$u_{\gamma_i}$ path which is internally vertex disjoint from $S_2$. Let $Q_i$ be a such a path containing minimum possible vertices of $S_1$. If $Q_i$ intersects $S_1$ at most once then we are done. Otherwise $Q_i$ intersects $S_1$ at least twice in vertices $x^i_{r_1}$ and $y^i_{r_2}$ with $x^i_{r_1}$ the vertex closer to $v_\alpha$. Since $S_1$ is a minimum $v_\alpha$-$T$ separator, we have the existence of a $v_\alpha$-$y^i_{r_2}$ path which is internally disjoint from $S_1$. Therefore, we can use this path and the subpath of $Q_i$ from $y^i_{r_2}$ to $u_{\gamma_i}$ to obtain a walk from $v_\alpha$ to $u_{\gamma_i}$, which  contains fewer vertices of $S_1$ than $Q_i$, a contradiction to the choice of $Q_i$. This completes the proof of the first statement.

For the second statement, observe that if $u_{\gamma_i}\in R[v_\alpha,S_1]$ then such a path can be found by a simple BFS from $v_\alpha$. On the other hand, if $u_{\gamma_i}\in R[v_\alpha,S_2]\setminus     R[v_\alpha,S_1]$ then we can find the path $P_i$ by computing an arbitrary path from $u_{\gamma_i}$ to a vertex $x_r\in S_1$ such that this path is internally vertex disjoint from $S_1$ and $S_2$. We then compute a path from $v_\alpha$ to $x_r$ which is internally vertex disjoint from $S_1$. Since this can be achieved by 2 applications of a standard Breadth First Search, the claimed bound on the running time holds.

For the final statement, observe that any $v_\alpha$-$T\cup \{u_{\gamma_i}\}$ separator	 disjoint from $V(P_i)\cap (S_1\cup S_2)$ must contain a vertex in  $R(v_\alpha,S_1)$ or  $R(v_\alpha,S_2)\setminus R[v_\alpha,S_1]$. By the premise of the lemma, any such separator must have size at least $\ell+1$. This completes the proof of the lemma.    			
\end{proof}

We are now ready to prove Theorem \ref{thm:main_theorem} by describing our algorithm for {\labelcoverv}. Before doing so, we  make the following important remark regarding the way we use the algorithms described in this subsection. In the description of our main algorithm, there will be points where we make a choice to \emph{not} delete certain vertices. That is, we will choose to exclude them from the solution being computed. At such points, we say that we make these vertices \emph{undeletable}. 

All the above algorithms also work when given an undeletable set of vertices in the graph and the minimum separators we are looking for are the minimum \emph{among} those separators disjoint from the undeletable set of vertices. Regarding the running time of these algorithms, there will be a multiplicative factor of $|\Sigma|\cdot k$ which arises due to potentially blowing up the size of the graph by a factor of $|\Sigma| \cdot k$ by making $(|\Sigma|\cdot k)+1$ copies of every undeletable vertex. \\

\section{The Linear time algorithm for {\labelcoverv}}

%
%
%
%
%
%

\subsection{Description of the algorithm.}
Before we describe our algorithm, we state certain assumptions we make regarding the input. We assume that at any point, we are dealing with a connected graph $G$. 
 Furthermore, we assume that instances of {\labelcoverv} are  given in the form of a tuple -- $(G,k,\phi,\tau,w^*,V^\infty)$ where the element $w^*$ denotes either a vertex from $V(G)$ or it is undefined. If $w^*$ denotes a vertex then, $|\tau(w^*)|=1$ and we will attempt to solve the problem on the tuple $(G,k,\phi,\tau,w^*,V^\infty)$ under the assumption that $w^*$ is not in the solution (which is required to be disjoint from $V^\infty$). Furthermore the definition of the problem allows us to assume that if there is a feasible labeling for this instance (after deleting a solution) then there is one consistent with $\tau$. Since $\tau(w^*)$ is singleton, any feasible labeling consistent with $\tau$ must set $w^*$ to the unique label in $\tau(w^*)$.


We first check if $G$ already has a feasible labeling (not necessarily one consistent with $\tau$). If so, then we are done. If not and $k=0$ then we return {\No}. If any connected component of $G$ has a feasible labeling then we remove this component. Otherwise, we check if $w^*$ is defined. If $w^*$ is undefined, then we pick an arbitrary \emph{deletable} vertex $v\in V(G)$. That is $v\notin V^\infty$. We then recursively solve the problem on the instances $I_{q_0},\dots, I_{q_r}$ where $\{q_1,\dots, q_r\}=\tau(v)$ and for each $q_i$ where $i\geq 1$, the instance 
$I_{q_i}$ is defined to be $(G,k,\phi,\tau_{v=q_i},w^*,V_1^\infty)$ with $\tau_{v=q_i}$ defined as the function obtained from $\tau$ by restricting the image of $v$ to the singleton set $\{q_i\}$, $w^*$ defined as $w^*=v$ and $V_1^\infty$ defined as $V_1^\infty=V^\infty \cup\{v\}$. The instance $I_{q_0}$ is defined as $(G-\{v\},k-1,\phi',\tau',w^*,V^\infty)$ where $\phi'$ and $\tau'$ are restrictions of $\phi$ and $\tau$ to the graph $G-\{v\}$. This will be the only branching rule which has a branching factor depending on the parameter (in this case the size of the label set $\Sigma$) and we call this rule, $\mathbf{B_0}$.

   We now describe the steps executed by the algorithm in the case when $w^*$ is defined. Suppose that $w^*=v$, $\tau(v)=\alpha$. Recall that by our assumption regarding well-formed inputs, if $w^*$ is defined then $\tau(w^*)$ must be a singleton set. We set 
   $T=\bigcup_{u\in V(G)} \bigcup_{\gamma\in \Sigma\setminus \tau(u)} u_\gamma $. Intuitively, $T$ is the set of all vertices $u_\gamma$ such that if there is a feasible labeling of $G$ (after deleting the solution) which sets $v$ to $\alpha$ then it cannot be consistent with $\tau$ unless the solution hits all paths in $H_I$ (where $I$ is the given instance) between $v_\alpha$ and $u_\gamma$.  We remark that since $T$ depends only on the input instance $I$, we use $T(I)$ to denote the set $T$ corresponding to any input instance $I$. Once we set $T$ as described we first check if there is a $v_\alpha$-$T$ path in $H_I$. If not, then the algorithm deletes the component of $G$ containing $v$ and recurses by setting $w^*$ to be undefined. The correctness of this operation is argued as follows. Observe that $T$ contains all vertices of $[v]\setminus \{v_\alpha\}$ and excludes $v_\alpha$. Therefore, Lemma \ref{lem:characterization} implies that the component of $G$ containing $v$ already has a feasible labeling and hence can be removed.  
   
   Otherwise if there is a $v_\alpha$-$T$ path in $H_I$, then we execute the algorithm of Lemma \ref{lem:crux lemma} with this definition of $v$, $\alpha$ and $T$ and undeletable set $[V^\infty]$. Observe that $T$ contains all vertices of $[v]\setminus \{v_\alpha\}$ but excludes $v_\alpha$. This is because $\tau(v)=\{\alpha\}$. The next steps of our algorithm depend on the output of this subroutine. For each of the four possible  outputs, we describe an exhaustive branching.

    \begin{description}
   
    	\item[Case 1:] \emph{ The subroutine returns that there is no $v_\alpha$-$T$ separator of size at most $|\Sigma| \cdot k$ which is disjoint from $[V^\infty]$}. In this case, our algorithm returns {\sc No}. The correctness of this step follows from Lemma \ref{lem:reduction}.
    	\medskip

    	\item[Case 2:] \emph{The subroutine returns a good $v_\alpha$-$T$ separator $S$ which is smallest among all $v_\alpha$-$T$ separators disjoint from $[V^\infty]$ such that no other $v_\alpha$-$T$ separator disjoint from $[V^\infty]$ and having the same size as $S$, covers $S$.} In this case, we do the following. For each vertex $u_\gamma$ in the set $R(v_\alpha,S)$ where $u\in V(G)$ and $\gamma\in \Sigma$, we set $\tau(u)=\{\gamma\}$ and add $u$ to $V^\infty$. That is, we set $V^\infty=V^\infty\cup (R(v_\alpha,S))\inv$. Note that prior to this operation, $\gamma\in \tau(u)$ since otherwise $u_\gamma$ would belong to $T$. We then pick an arbitrary vertex $x_\delta\in S$ and recursively solve the problem on 2 instances $I_1$ and $I_2$ defined as follows. The instance $I_1$ is defined to be $(G-\{x\},k-1,\phi',\tau',V^\infty)$ where $\phi'$ and $\tau'$ are restrictions of $\phi$ and $\tau$ to $G-\{x\}$. The instance $I_2$ is defined to be $(G,k,\phi,\tau',V_1^\infty)$ where $V_1^\infty=V^\infty\cup \{x\}$ and $\tau'$ is defined to be the same as $\tau$ on all vertices but $x$ and $\tau'(x)=\{\delta\}$. We call this branching rule, $\mathbf{B_1}$. The exhaustiveness of this branching step follows from the fact that once the vertices in $(R(v_\alpha,S)\inv)$ are made undeletable, unless the vertex $x$ is deleted, Observation \ref{obs:color_propagation} forces any feasible labeling that labels $v$ with $\alpha$ to label $x$ with $\delta$.

    	\medskip

    	\item[Case 3:] \emph{The subroutine correctly concludes that there is no good $v_\alpha$-$T$ separator which is also smallest among all $v_\alpha$-$T$ separators disjoint from $[V^\infty]$.} In this case, we compute $S$, the minimum $v_\alpha$-$T$ separator that is disjoint from $V^\infty$ and closest to $v_\alpha$. Since $S$ is not good, $R[v_\alpha,S]$ contains a pair of vertices $u_{\gamma_1}$ and $u_{\gamma_2}$ for some $u\in V(G)$ and $\gamma_1,\gamma_2\in \Sigma$. Furthermore, since $S$ is a $v_\alpha$-$T$ separator, it must be the case that $u_{\gamma_1}$ and $u_{\gamma_2}$ are not in $T$. This implies that $\{\gamma_1,\gamma_2\}\subseteq \tau(u)$. We now recursively solve the problem on 3 instances $I_0$, $I_1$, $I_2$ defined as follows. The instance $I_0$ is defined as $(G-\{u\},k-1,\phi',\tau',w^*,V^\infty)$, where $\phi'$ and $\tau'$ are defined as the restrictions of $\phi$ and $\tau$ to the graph $G-\{u\}$. The instance $I_1$ is defined as $(G,k,\phi,\tau',w^*,V_1^\infty)$ where $V_1^\infty=V^\infty \cup \{u\}$ and $\tau'$ is defined to be the same as $\tau$ on all vertices but $u$ and $\tau'(u)=\tau(u)\setminus \{\gamma_1\}$. Similarly, the instance $I_2$ is defined as $(G,k,\phi,\tau',w^*,V_1^\infty)$ where $V_1^\infty=V^\infty \cup \{u\}$ and $\tau'$ is defined to be  the same as $\tau$ on all vertices but $u$ and $\tau'(u)=\tau(u)\setminus \{\gamma_2\}$. We call this branching rule $\mathbf{B_2}$.
    	
    	The exhaustiveness of this branching follows from the fact that if $u$ is not deleted (the first branch) then any feasible labeling of $G-X$ for a hypothetical solution $X$ must label $u$ with at most one label out of $\gamma_1$ and $\gamma_2$. Therefore, if $I$ is a {\Yes} instance then for at least one of the 2 instances $I_1$ or $I_2$, there is a feasible labeling of $G-X$ consistent with the corresponding $\tau'$.

%
%
%

    	
    	\item[Case 4:] Finally, we address the case when the subroutine returns a pair of minimum (among those disjoint from $[V^\infty]$) $v_\alpha$-$T$ separators $S_1$ and $S_2$ such that $S_2$ covers $S_1$, $S_1$ is good, $S_2$ is bad and there is no minimum (among those disjoint from $V^\infty$) $v_\alpha$-$T$ separator which covers $S_1$ and is covered by $S_2$. In this case, $R[v_\alpha,S_2]$ contains a pair of vertices $u_\gamma,u_\delta$ for some vertex $u\in V(G)$ and $\gamma,\delta\in \Sigma$.

    	We execute the algorithm of Lemma \ref{lem:twopaths} to compute in time $\Oh(|\Sigma|\cdot k (m+n))$,  a $v_\alpha$-$u_\gamma$ path $P_1$ and a $v_\alpha$- $u_\delta$ path $P_2$ such that both paths are internally vertex disjoint from $S_2$ and contain at most one vertex of $S_1$ each. Let  $x^1,x^2\in V(G)$ and $\beta_1,\beta_2\in \Sigma$ be such that $x^1_{\beta_1}$ and $x^2_{\beta_2}$ are the vertices of $S_1$ in $P_1$ and $P_2$ respectively. Note that $P_1$ or $P_2$ may be disjoint from $S_1$. If $P_i$ ($i\in \{1,2\}$) is disjoint from $S_1$ then we let $x^i_{\beta_i}$ be undefined. We now recurse on the following (at most) 5 
    	instances $I_1,\dots, I_5$ defined as follows.
    	
    	\begin{itemize}
    		
    		\item $I_1=(G-x^1,k-1,\phi',\tau',w^*,V^\infty)$ where $\phi'$ and $\tau'$ are restrictions of $\phi$ and $\tau$ to $G-\{x^1\}$.
    		\item $I_2=(G-x^2,k-1,\phi',\tau',w^*,V^\infty)$ where $\phi'$ and $\tau'$ are restrictions of $\phi$ and $\tau$ to $G-\{x^2\}$.
    		\item $I_3=(G-u,k-1,\phi',\tau',w^*,V^\infty)$ where $\phi'$ and $\tau'$ are restrictions of $\phi$ and $\tau$ to $G-\{u\}$.
    	 
    		\item $I_4=(G,k,\phi,\tau',w^*,V_1^\infty)$ where $V_1^\infty=V^\infty\cup (R(v_\alpha,S_1))\inv\cup \{x^1\}$ and $\tau'$ is the same as $\tau$ on all vertices of $G$ except $u$ and $\tau'(u)=\tau(u)\setminus \{\gamma\}$.

    		\item $I_5=(G,k,\phi,\tau',w^*,V_1^\infty)$ where $V_1^\infty=V^\infty\cup (R(v_\alpha,S_1))\inv\cup \{x^2\}$ and $\tau'$ is the same as $\tau$ on all vertices of $G$ except $u$ and $\tau'(u)=\tau(u)\setminus \{\delta\}$.

    	\end{itemize}
    	
    	This branching rule is called $\mathbf{B_3}$.
    	We argue the exhaustiveness of the branching as follows. The first three branches cover the case when the solution intersects the set $\{x^1,x^2,u\}$. Suppose that a hypothetical solution, say $X$, is disjoint from $\{x^1,x^2,u\}$. By Lemma \ref{lem:reduction}, we may assume that $X$ is disjoint from $R(v_\alpha,S_1)$. Since any feasible labeling of $G-X$ sets $u$ to at most one of $\{\gamma_1,\gamma_2\}$, branching into 2 cases by excluding $\gamma_1$ from $\tau(u)$ in the first case and excluding $\gamma_2$ from $\tau(u)$ in the second case gives us an exhaustive branching.\medskip

    	\end{description}

    	This completes the description of the algorithm. The correctness follows from the exhaustiveness of the branchings.  We will now prove the running time bound stated in the theorem.
    	
    	\medskip
    	
    	\noindent
    	\textbf{Analysis of running time.} It follows from the description of the algorithm and the bounds already proved on the running time of each subroutine, that each step can be performed in time $\Oh((\Sigma+k)^{\Oh(1)}(m+n))$. Therefore, we only focus on bounding the number of nodes in the search tree resulting from this branching algorithm. In order to analyse this number, we introduce the following measure for the instance $I=(G,k,\phi,\tau,w^*,V^\infty)$ corresponding to any node of the search tree. We define $\mu(I)=(\Sigma+1)k-\lambda(I)$ 
    	 where  \medskip

%
%
    	
    	$\lambda(I)$ =
$\left\{
	\begin{array}{ll}
		\lambda(w^*,T(I))  & \mbox{if }  w^* \mbox{ is defined} \\
		0 & \mbox{otherwise } 
	\end{array}
\right.$

    	 Note that $\lambda(w^*,T(I))$ denotes the size of the smallest $w^*$-$T(I)$ separator in $H_I$ among those disjoint from $[V^\infty]$. 
%
%
    	Furthermore, observe that $\mu(I)\leq (|\Sigma|+1)\cdot k$ for any instance on which the algorithm can potentially branch.
    	We now argue that this measure strictly decreases in each branch of every branching rule and since the  number of branches in any branching rule is bounded by $max$ $\{|\Sigma|+1,5\}$ (Rules $\mathbf{B_0}$ and $\mathbf{B_3}$), the time bound claimed in the statement of Theorem \ref{thm:main_theorem} follows. \medskip
    	
\noindent
	\textbf{Rule $\mathbf{B_0}$:} Let $I$ be the instance on which this branching rule is executed and let $I'$ be an instance resulting from an application of this rule. Since $\mathbf{B_0}$ is applicable on $I$, it must be the case that $w^*$ is undefined in $I$ and hence $\lambda(I)=0$. If $k$ drops in $I'$, then it follows from the definition of the measure that $\mu(I')<\mu(I)$. On the other hand, suppose that in $I'$, $w^*$ is defined to be $v_\alpha$ for some $\alpha\in \Sigma$. Since the component of $G$ containing $v$ does not have a feasible labeling and in particular no feasible labeling that sets $v$ to $\alpha$, there is at least one path in $H_I$ (and hence in $H_{I'}$) from $v_\alpha$ to $[v]\setminus \{v_\alpha\}$. As a result, there is at least one path in $H_{I'}$ from $w^*$ to $T(I')$, implying that $\lambda(I')>0$, which in turn implies that $\mu(I')<\mu(I)$.\\


\noindent 
\textbf{Rule $\mathbf{B_1}$:} Observe that $\lambda(I_1)\geq \lambda(I)-|\Sigma|$. Since the budget $k$ drops by 1 for $I_1$, it follows that $\mu(I_1)<\mu(I)$. Furthermore, it follows from Lemma \ref{lem:crux lemma} that $\lambda(I_2)>\lambda(I)$, implying that $\mu(I_2)<\mu(I)$.\\

		\noindent
\textbf{Rule $\mathbf{B_2}$:} Since the budget $k$ drops by 1 for $I_0$ and $\lambda(I_0)\geq \lambda(I)-|\Sigma|$, it follows that $\mu(I_0)<\mu(I)$. Furthermore, it follows from Lemma \ref{lem:crux lemma} that $\lambda(I_1),\lambda(I_2)>\lambda(I)$, implying that $\mu(I_1),\mu(I_2)<\mu(I)$.\\

	\noindent		
\textbf{Rule $\mathbf{B_3}$:} The budget $k$ drops by 1 for the instances $I_1,I_2,I_3$ and for each $i\in \{1,2,3\}$, it holds that $\lambda(I_i)\geq \lambda(I)-|\Sigma|$. Hence, $\mu(I_i)<\mu(I)$ for each $i\in \{1,2,3\}$. For the instances $I_4$ and $I_5$, it follows from Lemma \ref{lem:crux lemma} that $\lambda(I_4),\lambda(I_5)>\lambda(I)$, implying that $\mu(I_4),\mu(I_5)<\mu(I)$.

\medskip

\section{Conclusions}
We have presented the first \emph{linear-time} {\FPT} algorithm for the {\labelcoverv} problem. The parameter-dependence in the running time of this algorithm is $2^{\Oh(k \cdot |\Sigma| \log |\Sigma|)}$. As a result, this algorithm improves upon that of Chitnis et al.~\cite{ChitnisCHPP12} (which has parameter-dependence $2^{\Oh(k^2 \cdot \log |\Sigma|)}$) with respect to the dependence on both the parameter as well as input-size when $|\Sigma|\leq k$. However, the best known parameter-dependence is $4^{k \log |\Sigma|}$ which  was obtained by  Wahslstr\"{o}m~\cite{Wahlstrom14}, albeit at a significantly higher dependence on the input-size. We leave open 
the question of finding the optimal dependence on the parameter while preserving linear dependence on the input-size. 

\bibliographystyle{siam}
\bibliography{literature}

\newpage
\appendix

\end{document}